\begin{document}

\title{On Performance of Sparse Fast Fourier Transform Algorithms Using the Aliasing Filter}

%\titlerunning{Short form of title}        % if too long for running head

\author{Bin Li$^{1}$, Zhikang Jiang$^{1}$ and Jie Chen$^{1}$}

%\authorrunning{Short form of author list} % if too long for running head

\institute{Bin Li \at
              \email{sulibin@shu.edu.cn}           %  \\
%             \emph{Present address:} of F. Author  %  if needed
           \and
           Zhikang Jiang \at
              \email{zkjiang@i.shu.edu.cn}  
           \and
           Jie Chen \at
              \email{jane.chen@shu.edu.cn}  
           \and
           $^{1}$ \quad School of Mechanical and Electrical Engineering and Automation, Shanghai University, Shanghai 200072, China
}

\date{Received: date / Accepted: date}
% The correct dates will be entered by the editor

\maketitle

\begin{abstract}

Computing the Sparse Fast Fourier Transform(sFFT) of a $K$-sparse signal of size $N$ has emerged as a critical topic for a long time. There are mainly two stages in the sFFT: frequency bucketization and spectrum reconstruction. Frequency bucketization is equivalent to hashing the frequency coefficients into $B(\approx{K})$ buckets through one of these filters: Dirichlet kernel filter, flat filter, aliasing filter, etc. The spectrum reconstruction is equivalent to identifying frequencies that are isolated in their buckets. More than forty different sFFT algorithms compute Discrete Fourier Transform(DFT) by their unique methods so far. In order to use them properly, the urgent topic of great concern is how to analyze and evaluate the performance of these algorithms in theory and practice. The paper mainly discusses the sFFT Algorithms using the aliasing filter. In the first part, the paper introduces the technique of three frameworks: the one-shot framework based on the compressed sensing(CS) solver, the peeling framework based on the bipartite graph and the iterative framework based on the binary tree search. Then, we get the conclusion of the performance of six corresponding algorithms: sFFT-DT1.0, sFFT-DT2.0, sFFT-DT3.0, FFAST, R-FFAST and DSFFT algorithm in theory. In the second part, we make two categories of experiments for computing the signals of different SNR, different $N$, different $K$ by a standard testing platform and record the run time, percentage of the signal sampled and $L0$, $L1$, $L2$ error both in the exactly sparse case and general sparse case. The result of experiments satisfies the inferences obtained in theory.

% Keywords
\keywords{Sparse Fast Fourier Transform(sFFT); aliasing filter; sub-linear algorithms; computational complexity}

\end{abstract}

% The fields PACS, MSC, and JEL may be left empty or commented out if not applicable
%\PACS{J0101}
%\MSC{}
%\JEL{}

%%%%%%%%%%%%%%%%%%%%%%%%%%%%%%%%%%%%%%%%%%
% Only for the journal Diversity
%\LSID{\url{http://}}

%%%%%%%%%%%%%%%%%%%%%%%%%%%%%%%%%%%%%%%%%%
% Only for the journal Applied Sciences:
%\featuredapplication{Authors are encouraged to provide a concise description of the specific application or a potential application of the work. This section is not mandatory.}
%%%%%%%%%%%%%%%%%%%%%%%%%%%%%%%%%%%%%%%%%%

%%%%%%%%%%%%%%%%%%%%%%%%%%%%%%%%%%%%%%%%%%
% Only for the journal Data:
%\dataset{DOI number or link to the deposited data set in cases where the data set is published or set to be published separately. If the data set is submitted and will be published as a supplement to this paper in the journal Data, this field will be filled by the editors of the journal. In this case, please make sure to submit the data set as a supplement when entering your manuscript into our manuscript editorial system.}

%\datasetlicense{license under which the data set is made available (CC0, CC-BY, CC-BY-SA, CC-BY-NC, etc.)}

%%%%%%%%%%%%%%%%%%%%%%%%%%%%%%%%%%%%%%%%%%
% Only for the journal Toxins
%\keycontribution{The breakthroughs or highlights of the manuscript. Authors can write one or two sentences to describe the most important part of the paper.}

%\setcounter{secnumdepth}{4}
%%%%%%%%%%%%%%%%%%%%%%%%%%%%%%%%%%%%%%%%%%

%%%%%%%%%%%%%%%%%%%%%%%%%%%%%%%%%%%%%%%%%%

%%%%%%%%%%%%%%%%%%%%%%%%%%%%%%%%%%%%%%%%%%

\section{Introduction \label{Sect1}}
The widely popular algorithm to compute DFT is the fast Fourier transform(FFT) invented by Cooley and Tukey, which can compute a signal of size $N$ in $O(N$log$N$) time and use $O(N)$ samples. Nowadays, with the demand for big data computing and low sampling ratio, it motivates the require for the new algorithms to replace previous FFT algorithms that can compute DFT in sublinear time and only use a subset of the input data. The new sFFT algorithms take advantage of the signal's inherent characteristics that a large number of signals are sparse in the frequency domain, only $K(K<< N)$ frequencies are non-zeros or are significantly large. Under this assumption, people can reconstruct the spectrum with high accuracy by using only $K$ most significant frequencies. Because of its excellent performance and generally satisfied assumptions, the technology of sFFT was named one of the ten Breakthrough Technologies in MIT Technology Review in 2012.

Through frequency bucketization, $N$ frequency coefficients are hashed into $B$ buckets, and the length of one bucket is denoted by $L$. The effect of the Dirichlet kernel filter is to make the signal convoluted a rectangular window in the time domain; it can be equivalent to the signal multiply a Dirichlet kernel window of size $L(L<<N)$ in the frequency domain. The effect of the flat window filter is to make the signal multiply a mix window in the time domain; it can be equivalent to the signal convoluted a flat window of size $L(L<<N)$ in the frequency domain. The effect of the aliasing filter is to make the signal multiply a comb window in the time domain; it can be equivalent to the signal convoluted a comb window of size $B(B \approx{K})$ in the frequency domain. After bucketization, the algorithm only needs to focus on the non-empty buckets and locate the positions and estimate values of the large frequency coefficients in those buckets in what we call the identifying frequencies or the spectrum reconstruction. Up to now, there are more than forty sFFT algorithms using the sFFT idea and about ten sFFT algorithms using the aliasing filter. People are concerned about the performance of these algorithms including runtime complexity, sampling complexity and robustness performance. The results of these performance analyses are the guide for us to optimize these algorithms and use them selectively. The paper will give complete answers in theory and practice.

The first sFFT algorithms called the Ann Arbor fast Fourier transform(AAFFT) algorithm using the Dirichlet kernel filter is a randomized algorithm with runtime and sampling complexity $O(K^2 \text{poly(log}N))$. The performance of the AAFFT0.5 \cite{IEEEexample:Gilbert2002Near} algorithm was later improved to $O(K$poly(log$N))$ in the AAFFT0.9 \cite{IEEEexample:Iwen2007Empirical}, \cite{IEEEexample:Gilbert2008A} algorithm through the use of unequally-spaced FFTs and the use of binary search technique for spectrum reconstruction.

The sFFT algorithms using the flat window filter can compute the exactly $K$-sparse signal with runtime complexity $O(K\text{log}N)$ and general $K$-sparse signal with runtime complexity $O(K\text{log}N\text{log}(N/K))$. In the one-shot framework, the sFFT1.0 \cite{IEEEexample:Hassanieh2012} and sFFT2.0 \cite{IEEEexample:Hassanieh2012} algorithm can locate and estimate the $K$ largest coefficients in one shot. In the iterative framework, the sFFT3.0 \cite{IEEEexample:Hassanieh2012-2} and sFFT4.0 \cite{IEEEexample:Hassanieh2012-2} algorithm can locate the position by using only 2$B$ or more samples of the filtered signal inspired by the frequency offset estimation both in the exactly sparse case and general sparse case. Later, a new robust algorithm, so-called the Matrix Pencil FFT(MPFFT) algorithm, was proposed in \cite{IEEEexample:Chiu2014} based on the sFFT3.0 algorithm. The paper \cite{IEEEexample:Li2020} summarizes the two frameworks and five reconstruction methods of these five corresponding algorithms. Subsequently the performance of these five algorithms is analyzed and evaluated both in theory and practice.

There are three frameworks for sFFT algorithms using the flat window filter. The algorithms of the one-shot framework are the so-called sFFT by downsampling in the time domain(sFFT-DT)1.0 \cite{IEEEexample:Hsieh2013}, sFFT-DT2.0 \cite{IEEEexample:Hsieh2015}, sFFT-DT3.0 algorithm. The algorithms of the peeling framework are the so-called Fast Fourier Aliasing-based Sparse Transform(FFAST) \cite{IEEEexample:Pawar2013}, \cite{IEEEexample:Pawar2018} and R-FFAST(Robust FFAST) \cite{IEEEexample:Pawar2015}, \cite{IEEEexample:Ong2019} algorithm. The algorithm of the iterative framework is the so-called Deterministic Sparse FFT(DSFFT) \cite{IEEEexample:Plonka2018} algorithm. This paper mainly discusses the technology and performance of these six algorithms, and all the details will be mentioned later in the paper.

Under the assumption of arbitrary sampling (even the sampling interval is less than 1), the Gopher Fast Fourier Transform(GFFT) \cite{IEEEexample:Iwen2010}, \cite{IEEEexample:Iwen2013} algorithm and the Christlieb Lawlor Wang Sparse Fourier Transform(CLW-SFT) \cite{IEEEexample:LAWLOR2013}, \cite{IEEEexample:Christlieb2016} algorithm can compute the exactly $K$-sparse signal with runtime complexity $O(K\text{log}K)$. They are aliasing-based search deterministic algorithms guided by the Chinese Remainder Theorem(CRT). The DMSFT \cite{IEEEexample:Merhi2019}(generated from GFFT) algorithm and CLW-DSFT \cite{IEEEexample:Merhi2019}(generated from CLW-SFT) algorithm can compute the general $K$-sparse signal with runtime complexity $O(K^2\text{log}K)$. They use the multiscale error-correcting method to cope with high-level noise.

The paper \cite{IEEEexample:Gilbert2014} summarizes a three-step approach in the stage of spectrum reconstruction and provides a standard testing platform to evaluate different sFFT algorithms. There are also some researches try to conquer the sFFT problem from other aspects: complexity \cite{IEEEexample:Kapralov2017}, \cite{IEEEexample:Indyk2014}, performance \cite{IEEEexample:Lopez-Parrado2015}, \cite{IEEEexample:Chen2017}, software \cite{IEEEexample:Schumacher2014}, \cite{IEEEexample:Wang2016}, hardware \cite{IEEEexample:Abari2014}, higher dimensions \cite{IEEEexample:Wang2019}, \cite{IEEEexample:Kapralov2019}, implementation \cite{IEEEexample:Pang2018}, \cite{IEEEexample:Kumar2019} and special setting \cite{IEEEexample:Plonka2016}, \cite{IEEEexample:Plonka2017} perspectives.

The identification of different sFFT algorithms can be seen through a brief analysis as above. The algorithms using the Dirichlet kernel filter are not efficient because it only bins some frequency coefficients into one bucket one time. The algorithms using the flat filter are probabilistic algorithms with spectrum leakage. In comparison to them, the algorithms using the aliasing filter is very convenient and no spectrum leakage, whether $N$ is a product of some co-prime numbers or $N$ is a power of two. This type of algorithm is suitable for the exactly sparse case and general sparse case, but it is not easy to solve the worst case because there may be many frequency coefficients in the same bucket accidentally because the scaling operation is of no use for the filtered signal.

The paper is structured as follows. Section \ref{Sect2} provides a brief overview of the basic sFFT technique using the aliasing filter. Section \ref{Sect3} introduces and analyzes three frameworks and six corresponding spectrum reconstruction methods. In Section \ref{Sect4}, we analyze and evaluate the performance of six corresponding algorithms in theory. In the one-shot framework, sFFT-DT1.0, sFFT-DT2.0 and sFFT-DT3.0 algorithm uses the CS solver with the help of the Moment Preserving Problem(MPP) method. In the peeling framework, FFAST and R-FFAST algorithm uses the bipartite graph with the help of the packet erasure channel method. In the iterative framework, the DSFFT algorithm uses the binary tree search with the help of the statistical characteristics. In Section \ref{Sect5}, we do two categories of comparison experiments. The first kind of experiment is to compare them with each other. The second is to compare them with other algorithms. The analysis of the experiment results satisfies the inferences obtained in theory. 
%%%%%%%%%%%%%%%%%%%%%%%%%%%%%%%%%%%%%%%%%%
\section{Preliminaries	\label{Sect2}}
In this section, we initially present some notations and basic definitions of sFFT. 
\subsection{Notation}
The $N$-th root of unify is denoted by $\omega _{N}=e^{-2\pi \mathbf{i}/N}$. The DFT matrix of size $N$ is denoted by $\mathbf{F}_{N}\in \mathbb{C}^{N\times N}$ as follows:
\begin{equation}
\mathbf{F}_{N}[j,k]=\frac{1}{N}\omega _{N}^{jk}	\label{Eq1}
\end{equation}

The DFT of a vector $x\in \mathbb{C}^{N}$ (consider a signal of size $N$) is a vector $\hat{x}\in \mathbb{C}^{N}$ defined as follows:
\begin{equation}
\begin{split}
\hat{x}=\mathbf{F}_{N}x	\\
\hat{x}_{i}=\frac{1}{N} \sum_{j=0}^{N-1}x_{j}\omega _{N}^{ij}	
\end{split}	\label{Eq2}
\end{equation}

In the exactly sparse case, spectrum $\hat x$ is exactly $K$-sparse if it has exactly $K$ non-zero frequency coefficients while the remaining $N-K$ coefficients are zero. In the general sparse case, spectrum $\hat x$ is general $K$-sparse if it has $K$ significant frequency coefficients while the remaining $N-K$ coefficients are approximately equal to zero. The goal of sFFT is to recover a $K$-sparse approximation $\hat {x'}$ by locating $K$ frequency positions $f_0, \dots, f_{K-1}$ and estimating $K$ largest frequency coefficients $\hat{x}_{f_{0}},\dots,\hat{x}_{f_{K-1}}$.

\subsection{Techniques of frequency bucketization using the aliasing filter}
The first stage of sFFT is encoding by frequency bucketization. The process of frequency bucketization using the aliasing filter is achieved through shift operation and subsampling operation.

The first technique is the use of shift operation. The offset parameter is denoted by $\tau \in \mathbb{R}$. The shift operation representing the original signal multiplied by matrix $\mathbf{S}_{\tau}$. $\mathbf{S}_{\tau}\in \mathbb{R}^{N\times N}$ is defined as follows:
\begin{equation}
\mathbf{S}_{\tau}[j,k]=\left\{\begin{matrix}  1, & j-\tau\equiv k(\text{mod} N) 
\\0,&\text{o.w.}
\end{matrix}\right.	\label{Eq3}
\end{equation}

If a vector $x'\in \mathbb{C}^{N}$, $x'=\mathbf{S}_{\tau} x$, $\hat{x'}=\mathbf{F}_{N}\mathbf{S}_{\tau} x$, such that:
\begin{equation}
\begin{split}
x'_i=x_{(i-\tau)}	\\
x'_{i+\tau}=x_i	\\
\hat{x'}_{i}=\hat{x}_{i}\omega ^{ \tau i}
\end{split}	\label{Eq4}
\end{equation}

The second technique is the use of the aliasing filter. The signal in the time domain is subsampled such that the corresponding spectrum in the frequency domain is aliased. It also means frequency bucketization. The subsampling factor is denoted by $L\in\mathbb{Z}^{+}$, representing how many frequencies aliasing in one bucket. The subsampling number is denoted by $B\in\mathbb{Z}^{+}$, representing the number of buckets$(B= N/L)$. The subsampling operation representing the original signal multiplied by matrix $\mathbf{D}_{L}$. $\mathbf{D}_{L}\in \mathbb{R}^{B\times N}$ is defined as follows:
\begin{equation}
\mathbf{D}_{L}[j,k]=\left\{\begin{matrix}  1, & k=jL
\\0,&\text{o.w.}
\end{matrix}\right.	\label{Eq5}
\end{equation}	

Let vector $\hat{y}_{B,\tau}\in\mathbb{C}^{B}$ be the filtered spectrum obtained by shift operation and subsampling operation. If $\hat{y}_{B,\tau}=\mathbf{F}_{B}\mathbf{D}_{L}\mathbf{S}_{\tau}x$, we get Equation (\ref{Eq6}) in bucket $i$.
\begin{equation}
\hat y _{B,\tau }[i]=\sum_{j=0}^{L-1}{\hat {x} _{jB+i} \omega ^{\tau(jB+i)}}	\label{Eq6}
\end{equation}	

As we see above, frequency bucketization includes three steps: shift operation($x'=\mathbf{S}_{\tau} x$, it costs 0 runtime), subsampling operation($y_{B,\tau}=\mathbf{D}_{L}\mathbf{S}_{\tau}x$, it costs $B$ samples),  Fourier Transform($\hat{y}_{B,\tau}=\mathbf{F}_{B}\mathbf{D}_{L}\mathbf{S}_{\tau}x$, it costs $B\text{log}B$ runtime). So totally frequency bucketization one round costs $B\text{log}B$ runtime and $B$ samples.

\section{Techniques	\label{Sect3}}
In this section, we introduce an overview of the techniques and frameworks that we will use in the sFFT algorithms based on the aliasing filter.

As mentions above, frequency bucketization can decrease runtime and sampling complexity because all operations are calculated in $B$ dimensions$(B=O(K), B<<N)$. After frequency bucketization, it needs spectrum reconstruction by identifying frequencies that are isolated in their buckets. The aliasing filter may lead to frequency aliasing where more than one significant frequencies are aliasing in one bucket. It increases the difficulty in recovery because finding frequency position and estimating vales of frequency become indistinguishable in terms of their aliasing characteristic. There are three frameworks to overcome this problem, the one-shot framework, the peeling framework, the iterative framework.  

\subsection{One-shot framework based on the CS solver}
Firstly we introduce the one-shot framework which can recover all $K$ significant frequencies in one shot. The block diagram of the one-shot framework is shown in Figure \ref{fig1}.
\begin{figure}[H]
\centering
\includegraphics[width=15 cm]{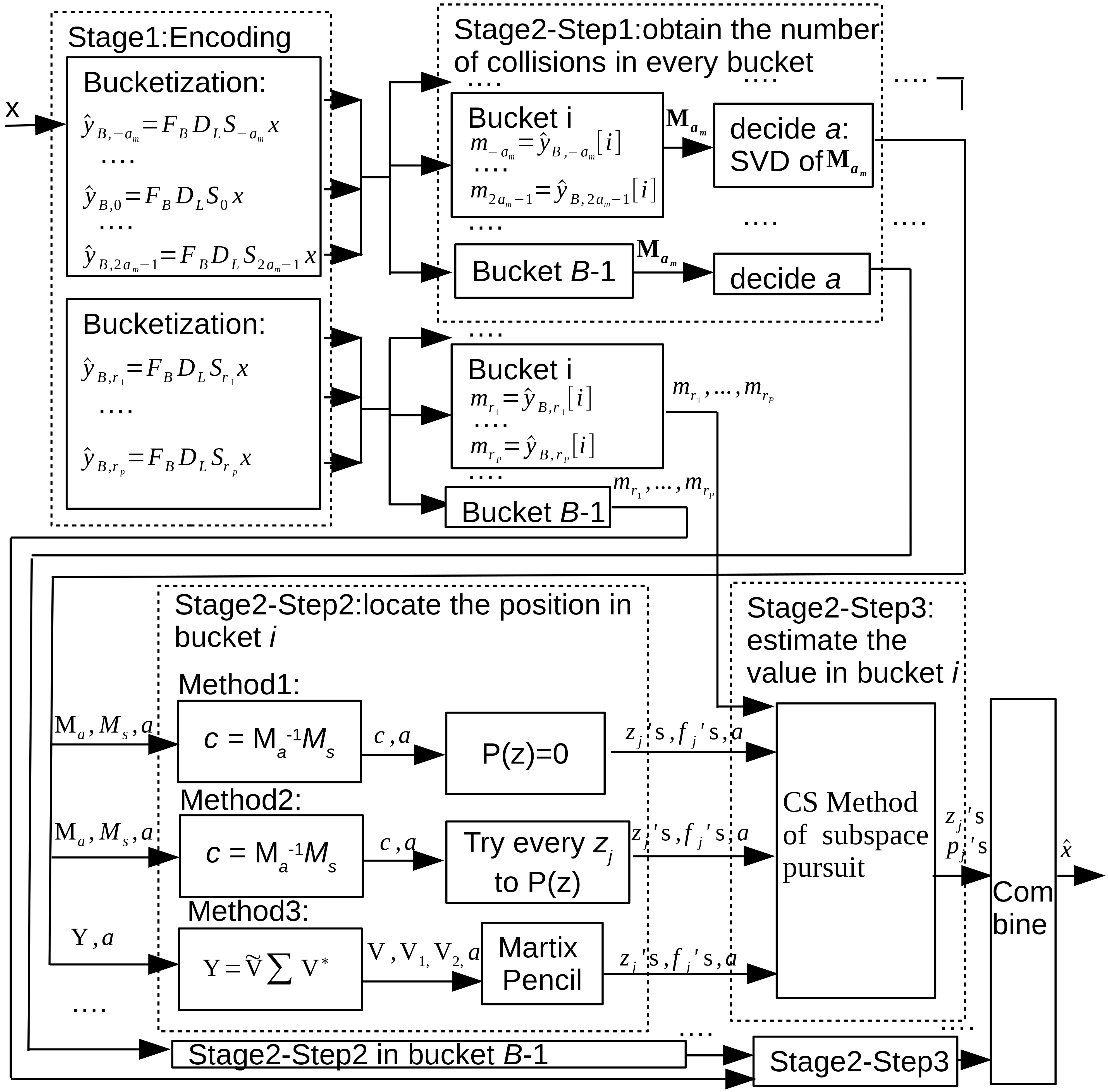}
\caption{A system block diagram of the one-shot framework.\label{fig1}}
\end{figure}  

The first stage of sFFT is encoding by frequency bucketization. Suppose there are at most $a_m$ number of significant frequencies aliasing in every bucket, after running 3$a_m$ times' round for set $\tau =\{ \tau_0=-a_m,\tau_1=-a_m+1,\dots,\tau_{a_m}=0,\dots,\tau_{3a_{m}-1}=2a_{m}-1 \}$, calculate $\hat{y}_{B,\tau}=\mathbf{F}_{B}\mathbf{D}_{L}\mathbf{S}_{\tau}x$ representing filtered spectrum by encoding. Suppose in bucket $i$, the number of significant frequencies is denoted by $a$, it is a high probability that $a \leq a_m$, we get simplified Equation (\ref{Eq9}) from Equation (\ref{Eq7}) and Equation (\ref{Eq8}). In Equation (\ref{Eq8}) and Equation (\ref{Eq9}), $p_j=\hat{x}_{f_{j}}$ respecting effective frequency values for $ |p_0| \geq |p_1| \geq \dots \geq |p_{a-1}| \gg \text{other values of }p_j$, $z_j=\omega^{f_j}$ respecting effective frequency position for $f_j \in \left \{ i,i+L,\dots,i+(L-1)B) \right \}$, $m_k=\hat y _{B,k }[i]$ respecting filtered spectrum in bucket $i$ for $k \in \left \{ -a_m,\dots,2a_m-1 \right \}$. In most cases, $a=0$ respecting sparsity. In a small number of cases, $a=1$ respecting only one significant frequency in the bucket. Only in very little cases, $a >=2$ respecting frequencies aliasing in the bucket. It's very unlikely that $a>a_m$. In the exactly sparse case, the approximately equal sign becomes the equal sign in Equation (\ref{Eq7}), Equation (\ref{Eq8}), Equation (\ref{Eq9}), Equation (\ref{Eq11}), Equation (\ref{Eq14}) and Equation (\ref{Eq18}).
\begin{equation}
\hat y _{B,\tau }[i]=\sum_{j=0}^{L-1}{\hat {x} _{jB+i} \omega ^{\tau(jB+i)}}
\approx \sum_{j=0}^{a_{m}-1}{\hat {x}_{f_j}}\omega ^{ \tau f_j}	\label{Eq7}
\end{equation}	
\begin{equation}
m_k \approx \sum_{j=0}^{a_{m}-1}p_j z_j^{k} \approx \sum_{j=0}^{a-1}p_j z_j^{k}		\label{Eq8}
\end{equation}	
\begin{equation}
\begin{bmatrix}
z_0^{0}& z_1^{0} & \cdots &z_{a-1}^{0} \\ 
z_0^{1}& z_1^{1} & \cdots &z_{a-1}^{1} \\ 
\cdots & \cdots & \cdots & \cdots\\ 
z_0^{2a-1}& z_1^{2a-1} & \cdots &z_{a-1}^{2a-1}
\end{bmatrix}
\begin{bmatrix}
p_0\\ 
p_1\\
\cdots\\
p_{a-1} 
\end{bmatrix}
\approx 
\begin{bmatrix}
m_0\\ 
m_1\\
\cdots\\
m_{2a-1} 
\end{bmatrix}	\label{Eq9}
\end{equation}	

The spectrum reconstruction problem in bucket $i$ is equivalent to obtain unknown variables including $a,z_0,\cdots,z_{a-1},p_0,\cdots,p_{a-1}$ for we have known variables including $m_{-a},\cdots,m_{2a-1}$. The aliasing problem is reformulated as Moment Preserving Problem(MPP). The MPP problem formulated by Bose-Chaudhuri-Hocquenghem(BCH) codes can be divided into three subproblems: how to obtain $a$, how to obtain $z_j$'s, how to obtain $p_j$'s in every bucket. Later, we will solve these three subproblems step by step.

Step 1: Obtain the number of significant frequencies in bucket $i$.

Solution: Suppose $a \leq  a_m$, it means $m_k$ is composed of at most $a_m$ number of Prony component $p_j$'s. Let vector $\hat z_j\in \mathbb{C}^{ a_m  \times 1}$ defined as $\hat{z}_j=[z_j^{0},z_j^{1},\dots,z_j^{a_m-1}]^{T}$ and Matrix $\mathbf{M}_{a_{m}}\in \mathbb{C}^{a_m \times a_m}$ defined as Equation (\ref{Eq10}), then the relationship between $\mathbf{M}_{a_m}$, $\hat z_j$ and $\hat z_j^T$ satisfies Theorem 1.
\begin{equation}
\mathbf{M}_{a_m}=\begin{bmatrix}
m_0 &m_1  &\cdots  & m_{a_m -1}\\ 
m_1 &m_2  &\cdots  & m_{a_m }\\ 
\cdots & \cdots & \cdots & \cdots\\ 
m_{a_m -1} &m_{a_m}  &\cdots  & m_{2a_m -1}
\end{bmatrix}_{{a_m}\times {a_m}}	\label{Eq10}
\end{equation}
\begin{lemma}
\begin{equation}
\mathbf{M}_{a_{m}}\approx \sum_{j=0}^{a_m-1}{p_j \hat{z}_j \hat{z}_j^{T}}	\label{Eq11}
\end{equation}
\end{lemma}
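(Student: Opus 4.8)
The plan is to prove Equation (\ref{Eq11}) by an entrywise comparison of the two sides, exploiting the Hankel structure of $\mathbf{M}_{a_m}$ on the left and the rank-one power structure of each outer product $\hat z_j \hat z_j^{T}$ on the right. First I would record the structural fact that, by the definition in Equation (\ref{Eq10}), the entry of $\mathbf{M}_{a_m}$ in row $r$ and column $c$ (indexing both from $0$ to $a_m-1$) is exactly $m_{r+c}$: the Hankel array places a single moment $m_{r+c}$ along each anti-diagonal, so the whole matrix is determined by the scalar sequence $m_0,m_1,\dots$ appearing in Equation (\ref{Eq8}).

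Next I would compute the corresponding entry of the right-hand side. Since $\hat z_j=[z_j^{0},z_j^{1},\dots,z_j^{a_m-1}]^{T}$, the outer product $\hat z_j \hat z_j^{T}$ has $(r,c)$ entry equal to $z_j^{r}\cdot z_j^{c}=z_j^{r+c}$, so the $(r,c)$ entry of $\sum_{j=0}^{a_m-1} p_j \hat z_j \hat z_j^{T}$ is $\sum_{j=0}^{a_m-1} p_j z_j^{r+c}$. It is clarifying to package this as a single Vandermonde factorization: stacking the vectors $\hat z_j$ as columns of $\mathbf{V}=[\hat z_0,\dots,\hat z_{a_m-1}]$ and writing $\mathbf{P}=\mathrm{diag}(p_0,\dots,p_{a_m-1})$, the summation equals $\mathbf{V}\mathbf{P}\mathbf{V}^{T}$, whose $(r,c)$ entry is again $\sum_{j} p_j z_j^{r+c}$. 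This both confirms the entry formula and exhibits the decomposition as a weighted Vandermonde product, which is the content the lemma is setting up for the later steps.

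The core of the argument is then to match the two expressions through Equation (\ref{Eq8}). Substituting $k=r+c$ into $m_k \approx \sum_{j=0}^{a_m-1} p_j z_j^{k}$ shows that the $(r,c)$ entry of $\mathbf{M}_{a_m}$ agrees with the $(r,c)$ entry of $\sum_{j=0}^{a_m-1} p_j \hat z_j \hat z_j^{T}$ for every admissible pair $(r,c)$. Since $r,c$ range only over $\{0,\dots,a_m-1\}$, the indices $r+c$ invoked here are precisely the moments already furnished by Equation (\ref{Eq8}), so no quantity outside the stated data is needed. Carrying this out for all entries simultaneously establishes Equation (\ref{Eq11}).

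The only delicate point---and the reason the statement carries $\approx$ rather than $=$---is the quality of Equation (\ref{Eq8}) itself, so I would close by discussing this rather than treating it as a genuine obstacle. In the general sparse case the approximation discards a tail: the $L-a_m$ small coefficients aliased into bucket $i$, together with the assumption $a\le a_m$ that renders $p_{a},\dots,p_{a_m-1}$ negligible. Each such term contributes a residual to $m_k$, and I would note that this residual enters $\mathbf{M}_{a_m}$ entrywise without amplification, so the matrix identity inherits exactly the approximation error of Equation (\ref{Eq8}). In the exactly sparse case every tail term vanishes, Equation (\ref{Eq8}) holds with equality, and hence Equation (\ref{Eq11}) becomes an exact identity, consistent with the remark following Equation (\ref{Eq9}). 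I do not expect any real difficulty here: the result is a direct Hankel/Vandermonde verification, and the only care required is bookkeeping the index range so that the factorization $\mathbf{M}_{a_m}\approx\mathbf{V}\mathbf{P}\mathbf{V}^{T}$ uses only the supplied moments.
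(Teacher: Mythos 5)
Your proposal is correct and follows essentially the same route as the paper's own proof: both reduce Equation (\ref{Eq11}) to the moment relation of Equation (\ref{Eq8}), the paper by matching the columns $\sum_j p_j \hat z_j z_j^{c} \approx [m_c, m_{c+1}, \dots]^{T}$ and you by matching the $(r,c)$ entries $m_{r+c} \approx \sum_j p_j z_j^{r+c}$, which is the same verification written entrywise. Your added Vandermonde packaging $\mathbf{V}\mathbf{P}\mathbf{V}^{T}$ and your index bookkeeping are sound (indeed, they quietly correct the off-by-one in the paper's displayed last column, whose entries should run only up to $m_{2a_m-2}$).
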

%% Example of a proof:
\begin{proof}[Proof of Theorem 1]
\begin{equation*}
\begin{split} 
&p_0\hat{z}_0z_0^{0}+p_1\hat{z}_1z_1^{0}+\dots+p_{a_{m}-1}\hat{z}_{a_{m}-1}z_{a_{m}-1}^{0}\approx[m_0,m_1,\dots,m_{a_{m}-1}]^{T} \\
&p_0\hat{z}_0z_0^{1}+p_1\hat{z}_1z_1^{1}+\dots+p_{a_{m}-1}\hat{z}_{a_{m}-1}z_{a_{m}-1}^{1}\approx[m_1,m_2,\dots,m_{a_{m}}]^{T}		\\
& \dots	\\
&p_0\hat{z}_0z_0^{a_{m}-1}+p_1\hat{z}_1z_1^{a_{m}-1}+\dots+p_{a_{m}-1}\hat{z}_{a_{m}-1}z_{a_{m}-1}^{a_{m}-1}\approx[m_{a_{m}-1},m_{a_{m}},\dots,m_{2a_{m}}]^{T}		\\
\end{split} 
\end{equation*}
  Based on the properties as mentioned above, we get Equation (\ref{Eq11}).
\end{proof}
Equation (\ref{Eq11}) is similar to the symmetric singular value decomposition(SSVD). Nevertheless, there are some differences. 1) $p_j$'s are complex but not real. 2) The columns of $[\hat{z}_0,\dots,\hat{z}_{a_m-1} ]$ are not mutually orthogonal normalized vectors. It is easy to do a transformation that $\mathbf{M}_{a_{m}}\approx \sum_{j=0}^{a_m-1}{p'_j \hat{z '}_j \hat{z '}_j^{T}}$, where $p_j'$'s are real and the absolute value of $p_j'$ is directly proportional to the $\left | p_j \right |$, and the columns of   $[\hat{z'}_0,\dots,\hat{z'}_{a_m-1} ]$ are mutually orthogonal normalized vectors. The paper \cite{IEEEexample:Bunse-Gerstner1988} proved that for the symmetric matrix, the  $\Sigma$ got from the SVD is equal to the $\Sigma$ got from the SSVD. For example, the SVD of $\begin{bmatrix} 0 &1 \\  1& 0 \end{bmatrix}$ is $\begin{bmatrix} 0 &1 \\  1& 0 \end{bmatrix}= \begin{bmatrix} 0 &1 \\ 
 1& 0 \end{bmatrix} \begin{bmatrix} 1 &0 \\  0& 1 \end{bmatrix} \begin{bmatrix} 1 &0 \\  0& 1\end{bmatrix}$ and the SSVD of $\begin{bmatrix} 0 &1 \\  1& 0 \end{bmatrix}$ is $\begin{bmatrix} 0 &1 \\  1& 0
\end{bmatrix}= \frac{1}{\sqrt{2}} \begin{bmatrix}1 &-i \\  1& i\end{bmatrix}\begin{bmatrix}1 &0 \\  0& 1\end{bmatrix}\frac{1}{\sqrt{2}}\begin{bmatrix}1 &1 \\  -i& i\end{bmatrix}$; the $\Sigma$ gained by these two ways are the same. After knowing it, we can compute the SVD of $\mathbf{M}_{a_{m}}$ and obtain $a_m$ singular values, then do the principal component analysis(PCA), $\sigma_1 \geq  \sigma_2 \geq  \dots \geq  \sigma_{a_m}$. In these $a_m$ number of singular values $\sigma_j$'s, how many large singular values mean how many efficient Prony components $p_j$'s; it also indicates how many significant frequencies in bucket $i$.

Step 2: Obtain effective frequency position $f_j$'s in bucket $i$.
    
Solution: Let the orthogonal polynomial formula $P(z)$ defined as Equation (\ref{Eq12}) and $P(z) \approx 0$. Let Matrix $\mathbf{M}_{a}\in \mathbb{C}^{a \times a}$ defined as Equation (\ref{Eq13}). Let vector $C$ defined as $C=[c_0,c_1,\dots,c_{a-1}]^T$. Let vector $M_s$ defined as $M_s=[-m_a,-m_{a+1},\dots$,
$-m_{2a-1}]^T$. The moments' formula satisfies Theorem 2.
\begin{equation}
P(z)= z^a+c_{a-1}z^{a-1}+ \dots + c_1z+c_0	\label{Eq12}
\end{equation}
\begin{equation}
\mathbf{M}_{a}=\begin{bmatrix}
m_0 &m_1  &\cdots  & m_{a -1}\\ 
m_1 &m_2  &\cdots  & m_{a }\\ 
\cdots & \cdots & \cdots & \cdots\\ 
m_{a -1} &m_{a_m}  &\cdots  & m_{2a -2}
\end{bmatrix}_{{a}\times {a}}			\label{Eq13}
\end{equation}
\begin{lemma}	$\mathbf{{M}_{a}}C \approx M_s$
\begin{equation}
\begin{bmatrix}
m_0 &m_1  &\cdots  & m_{a -1}\\ 
m_1 &m_2  &\cdots  & m_{a }\\ 
\cdots & \cdots & \cdots & \cdots\\ 
m_{a -1} &m_{a_m}  &\cdots  & m_{2a -2}
\end{bmatrix}_{{a}\times {a}}\begin{bmatrix}
c_0\\ 
c_1\\ 
\dots\\ 
c_{a-1}
\end{bmatrix}_{a \times 1}\approx \begin{bmatrix}
-m_a\\ 
-m_{a+1}\\ 
\dots \\ 
-m_{2a-1}
\end{bmatrix}_{a \times 1}		\label{Eq14}
\end{equation}
\end{lemma}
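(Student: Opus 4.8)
The plan is to recognize Equation~(\ref{Eq14}) as the classical linear-prediction (annihilating-polynomial) identity underlying Prony's method, and to prove it by inserting the moment representation from Equation~(\ref{Eq8}) and exploiting the fact that the effective frequency positions are precisely the roots of $P(z)$.

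First I would pin down the meaning of the coefficient vector $C$: the monic polynomial $P(z)=z^a+c_{a-1}z^{a-1}+\dots+c_1z+c_0$ of Equation~(\ref{Eq12}) is taken to be the one whose roots are exactly the $a$ effective positions $z_0,\dots,z_{a-1}$, i.e.\ $P(z)=\prod_{j=0}^{a-1}(z-z_j)$, which fixes $c_0,\dots,c_{a-1}$ by Vieta's formulas. The single property I actually need is $P(z_j)=0$ for every $j$, rewritten as $\sum_{l=0}^{a-1}c_l z_j^{l}=-z_j^{a}$.

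Then I would evaluate the $n$-th entry of the left-hand side $\mathbf{M}_{a}C$ for $n=0,\dots,a-1$, namely $\sum_{l=0}^{a-1}m_{n+l}c_l$, and substitute $m_k\approx\sum_{j=0}^{a-1}p_jz_j^{k}$. Interchanging the two finite sums and factoring out $z_j^{n}$ leaves the inner sum $\sum_{l=0}^{a-1}c_l z_j^{l}$, which by the root property equals $-z_j^{a}$; the whole entry then collapses to $-\sum_{j=0}^{a-1}p_jz_j^{n+a}\approx-m_{n+a}$, which is exactly the $n$-th entry of $M_s$. Assembling these $a$ scalar identities yields Equation~(\ref{Eq14}).

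The computation is routine once this structure is in place, so the main obstacle is conceptual rather than algebraic: correctly identifying $P(z)$ as the annihilating polynomial of the $z_j$'s and carefully tracking the approximate equalities. In the exactly sparse case every $\approx$ becomes an equality (as noted after Equation~(\ref{Eq9})), so Equation~(\ref{Eq14}) holds exactly; in the general sparse case the perturbations of the moments $m_k$ coming from the $N-K$ near-zero coefficients propagate linearly through the fixed $c_l$'s, leaving the identity valid up to a small error. I would close by noting the payoff for Step~2: after solving this (generally overdetermined) system for $C$, the positions $z_j$---and hence the frequencies $f_j$---are recovered as the roots of $P(z)$.
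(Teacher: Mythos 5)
Your proposal is correct and follows essentially the same route as the paper's proof: substitute the moment representation $m_k \approx \sum_{j} p_j z_j^{k}$ into each entry of $\mathbf{M}_a C$, swap the order of summation, and use the root property $P(z_j)\approx 0$ (i.e.\ $\sum_{l}c_l z_j^{l}\approx -z_j^{a}$) to collapse each entry to $-m_{n+a}$. The only difference is cosmetic: the paper works out the first row explicitly and says the remaining rows follow likewise, whereas you treat the general $n$-th row uniformly and make the Vieta/annihilating-polynomial definition of $C$ explicit, which is a slightly cleaner write-up of the same argument.
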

\begin{proof}[Proof of Theorem 2]
\begin{equation*}
\begin{split} 
&c_0m_0 \approx  (p_{0}z_{0}^{0}+ \dots p_{a-1}z_{a-1}^{0})c_0 \\
&\dots		\\
& c_{a-1}m_{a-1} \approx (p_{0}z_{0}^{a-1}+ \dots p_{a-1}z_{a-1}^{a-1})c_{a-1}	\\
&\Rightarrow c_0m_0+ \dots + c_{a-1}m_{a-1}		\\
& \approx  p_0(c_{0}z_{0}^{0}+ \dots c_{a-1}z_{0}^{a-1})+\dots+p_{a-1}(c_{0}z_{a-1}^{0}+ \dots c_{a-1}z_{a-1}^{a-1})\\
& \approx  (-p_0z_0^{a})+\dots+(-p_{a-1}z_{a-1}^{a}) \approx  -m_a\\
\end{split} 
\end{equation*}
The first element of $M_s$ has been proved and other elements of $M_s$ can also be proved.
\end{proof}
For the convenience of matrix calculation, for a matrix, the superscript "T" denotes the transpose, the superscript "+" denotes Moore-Penrose inverse or pseudoinverse, the superscript "*" denotes adjoint matrix, the superscript "-1" denotes inverse. Through Theorem 2, we can obtain $C \approx (\mathbf{M}_a^{-1}) M_s$. After gaining $C$, there are three ways to obtain $z_j$'s through Equation (\ref{Eq12}). The first approach is the polynomial method, the second approach is the enumeration method, and the last approach is the Matrix Pencil method. After knowing $z_j$'s, we can obtain approximate positions $f_j$'s through $z_j=\omega^{f_j}$.

Method 1) Polynomial method: In the exactly sparse case, the $a$ number of roots of $P(z)=0$ is the solution of $z_j$'s . For example if $a=1$, through $P(z) = z+c_0= 0$, then $z_0 = -c_0$. If $a = 2$, through $P(z) = z^2+ c_1z+c_0 = 0$, then $z_0=(-c_1-(c_1^2-4c_0)^{0.5})/2, z_1=(-c_1+(c_1^2-4c_0)^{0.5})/2$.

Method 2) Enumeration method: For $f_j \in \left \{ i,i+L,\dots,i+(L-1)B) \right \}$, try every possible position $z_j=\omega^{f_j}$ to the Equation (\ref{Eq12}), the first $a$ number of smallest $|P(z)|$ is the solution of $z_j$'s. $L$ times' attempts are needed in one solution.

Method 3) Matrix Pencil method: The method is proposed in paper \cite{IEEEexample:Hua1990}. For our problem, let the Toeplitz matrix $\mathbf{Y}$ defined as Equation (\ref{Eq15}), let $\mathbf{Y_1}$ be $\mathbf{Y}$ with the rightmost column removed defined as Equation (\ref{Eq16}), let $\mathbf{Y_2}$ be $\mathbf{Y}$ with the leftmost column removed defined as Equation (\ref{Eq17}). The set of generalized eigenvalues of $\mathbf{Y_2}-\lambda\mathbf{Y_1}$ satisfies Theorem 3.
\begin{equation}
\mathbf{Y}=\begin{bmatrix}
m_0 & m_{-1} &\dots  & m_{-a}\\ 
m_1 & m_0 & \dots & m_{-a+1}\\ 
\dots &\dots  & \dots & \dots\\ 
m_a &m_{a-1}  & \dots & m_0
\end{bmatrix}_{(a+1)\times(a+1)}		\label{Eq15}
\end{equation}
\begin{equation}
\mathbf{Y}_1=\begin{bmatrix}
m_0 & m_{-1} &\dots  & m_{-a+1}\\ 
m_1 & m_0 & \dots & m_{-a+2}\\ 
\dots &\dots  & \dots & \dots\\ 
m_a &m_{a-1}  & \dots & m_1
\end{bmatrix}_{(a+1)\times(a)}		\label{Eq16}
\end{equation}
\begin{equation}
\mathbf{Y}_2=\begin{bmatrix}
m_{-1} & m_{-2} &\dots  & m_{-a}\\ 
m_0 & m_{-1} & \dots & m_{-a+1}\\ 
\dots &\dots  & \dots & \dots\\ 
m_{a-1} &m_{a-2}  & \dots & m_0
\end{bmatrix}_{(a+1)\times(a)}		\label{Eq17}
\end{equation}
\begin{lemma} The set of generalized eigenvalues of $\mathbf{Y_2}-\lambda\mathbf{Y_1}$ are the $z_j$'s we seek.
\end{lemma}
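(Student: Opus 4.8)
The plan is to reduce this rectangular generalized eigenvalue problem to the spectral structure of a Vandermonde factorization of the Toeplitz data matrices. First I would invoke the moment model of Equation~(\ref{Eq8}), $m_k \approx \sum_{j=0}^{a-1} p_j z_j^{k}$, which is available for every index $k \in \{-a,\dots,a\}$ needed to build $\mathbf{Y}$ (since $a \le a_m$ guarantees the required moments $m_{-a},\dots,m_a$ have been measured). Writing each Toeplitz entry as $\mathbf{Y}[r,c]=m_{r-c}\approx \sum_{j=0}^{a-1} p_j z_j^{\,r} z_j^{-c}$ exhibits the factorization $\mathbf{Y}\approx \mathbf{V}\,\mathbf{P}\,\mathbf{U}$, where $\mathbf{V}\in\mathbb{C}^{(a+1)\times a}$ is the Vandermonde matrix $\mathbf{V}[r,j]=z_j^{r}$, $\mathbf{P}=\mathrm{diag}(p_0,\dots,p_{a-1})$, and $\mathbf{U}\in\mathbb{C}^{a\times(a+1)}$ has $\mathbf{U}[j,c]=z_j^{-c}$.

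Next I would track the two column deletions through this factorization. Deleting the rightmost ($c=a$) column of $\mathbf{Y}$ simply deletes the last column of $\mathbf{U}$, so $\mathbf{Y_1}\approx \mathbf{V}\,\mathbf{P}\,\mathbf{U}_0$ with $\mathbf{U}_0[j,c]=z_j^{-c}$ for $c=0,\dots,a-1$. Deleting the leftmost ($c=0$) column shifts the moment index by one, so $\mathbf{Y_2}[r,c]=m_{(r-1)-c}\approx \sum_{j} (p_j z_j^{-1})\, z_j^{\,r} z_j^{-c}$, giving $\mathbf{Y_2}\approx \mathbf{V}\,\mathbf{P}\,\mathbf{Z}^{-1}\mathbf{U}_0$ with $\mathbf{Z}=\mathrm{diag}(z_0,\dots,z_{a-1})$. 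The decisive observation is that $\mathbf{Y_1}$ and $\mathbf{Y_2}$ share the \emph{same} outer factors $\mathbf{V}$ and $\mathbf{U}_0$, and differ only through the diagonal matrix $\mathbf{Z}$.

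Then the pencil collapses to $\mathbf{Y_2}-\lambda\mathbf{Y_1}\approx \mathbf{V}\,\mathbf{P}\,(\mathbf{Z}^{-1}-\lambda\mathbf{I})\,\mathbf{U}_0$. Because the positions $f_j$ are distinct, the nodes $z_j=\omega^{f_j}$ are distinct nonzero roots of unity, so $\mathbf{V}$ has full column rank $a$ and the square Vandermonde $\mathbf{U}_0$ is invertible; because all $a$ frequencies are \emph{significant}, $\mathbf{P}$ is invertible as well. Hence the rank of the pencil is governed entirely by the diagonal middle factor, and it drops below $a$ exactly at the values of $\lambda$ equal to its diagonal entries. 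Reading the pencil in the orientation that places $\mathbf{I}-\lambda\mathbf{Z}$ in the middle, I would conclude that the generalized eigenvalues are precisely the $z_j$, from which the positions follow via $z_j=\omega^{f_j}$.

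The step I expect to be the main obstacle is the passage from the exact to the general sparse case. In the exactly sparse case every $\approx$ becomes $=$, the factorization is exact, and the rank argument above is airtight. In the noisy case the factorization holds only up to a perturbation, so the real work is to show that the generalized eigenvalues move continuously and stay near the true $z_j$; this is where distinctness of the nodes, a lower bound on $\min_j|p_j|$, and control of the node gaps $|z_i-z_j|$ enter, to keep $\mathbf{V}$, $\mathbf{P}$ and $\mathbf{U}_0$ uniformly well-conditioned. A secondary point I would have to nail down is the pencil orientation and eigenvalue convention, since the Toeplitz (rather than Hankel) structure makes the naive pencil $\mathbf{Y_2}-\lambda\mathbf{Y_1}$ produce the reciprocals $z_j^{-1}$; fixing the convention so the output matches the stated $z_j$ is a bookkeeping detail but must be done explicitly.
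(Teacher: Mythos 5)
Your proposal follows essentially the same route as the paper: both exhibit the Vandermonde factorizations $\mathbf{Y}_1=\mathbf{U}_{a+1}\mathbf{C}\,\mathbf{U}_a^*$ and $\mathbf{Y}_2=\mathbf{U}_{a+1}\mathbf{C}\,\mathrm{diag}(z_j)^*\,\mathbf{U}_a^*$ (your $\mathbf{V}\mathbf{P}\mathbf{U}_0$ and $\mathbf{V}\mathbf{P}\mathbf{Z}^{-1}\mathbf{U}_0$) and collapse the pencil to fixed outer factors around a diagonal middle factor whose entries are the eigenvalues. If anything, you are more careful than the paper, which omits the full-rank/invertibility justification and silently passes over the fact that the middle factor $\mathrm{diag}(z_j)^*-\lambda\mathbf{I}$ yields the conjugates/reciprocals $z_j^{-1}$ rather than the $z_j$ themselves --- precisely the bookkeeping point you flag explicitly.
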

\begin{proof}[Proof of Theorem 3]
Let the diagonal matrix $\mathbf{C}\in \mathbb{C}^{a \times a}$ defined as  $\mathbf{C}=\text{diag}(p_j)$, Let the Vandermonde Martix $\mathbf{U}_{a+1}\in \mathbb{C}^{(a+1) \times a}$ defined as follows:
$\mathbf{U}_{a+1}=\begin{bmatrix}
 1& 1 & \dots  &1 \\ 
z_0 &z_1  & \dots & z_{a-1}\\ 
 \dots &  \dots &  \dots &  \dots\\ 
 z_0^a& z_1^a & \dots  & z_{a-1}^a
\end{bmatrix}_{(a+1) \times a}$. $\mathbf{Y},\mathbf{Y_1},\mathbf{Y_2}$ has a Vandermonde decomposition, we can get $\mathbf{Y}=\frac{1}{a+1}\mathbf{U}_{a+1}\mathbf{C}\mathbf{U}_{a+1}^*$, $\mathbf{Y}_1=\frac{1}{a+1}\mathbf{U}_{a+1}\mathbf{C}\mathbf{U}_{a}^*$, $\mathbf{Y}_2=\frac{1}{a+1}\mathbf{U}_{a+1}\mathbf{C}( \text{diag}(z_j)^*)\mathbf{U}_{a}^*$. For example, if $a$=1, $\mathbf{Y}=\begin{bmatrix}
m_0 &m_{-1} \\ m_1 &m_0 \end{bmatrix}=\begin{bmatrix}p_0 & p_0 z_0^{-1}\\ 
 p_0 z_0^{1}& p_0\end{bmatrix}=\begin{bmatrix}1\\ z_0\end{bmatrix}
p_0\begin{bmatrix}1 & z_0^{-1}\end{bmatrix}$, and if $a$=2, $\mathbf{Y}=\begin{bmatrix}m_0 & m_{-1}  & m_{-2}\\ 
m_1 & m_0 & m_{-1}\\ m_2 & m_1 & m_0 \end{bmatrix}=
\begin{bmatrix} p_0+p_1 &  p_0z_0^{-1}+p_1z_1^{-1}& p_0z_0^{-2}+p_1z_1^{-2} \\ 
p_0z_0^{1}+p_1z_1^{1} & p_0+p_1 &p_0z_0^{-1}+p_1z_1^{-1} \\ 
p_0z_0^{2}+p_1z_1^{2} & p_0z_0^{1}+p_1z_1^{1} & p_0+p_1 \end{bmatrix}=
\begin{bmatrix} 1 & 1\\ z_0 &z_1 \\ z_0^2 & z_1^2 \end{bmatrix}
\begin{bmatrix} p_0 & \\ &p_1 \end{bmatrix}
\begin{bmatrix} 1 & z_0^{-1} & z_0^{-2} \\ 1& z_1^{-1}  & z_1^{-2} 
\end{bmatrix} $, $\mathbf{Y}_1=\begin{bmatrix}
m_0 & m_{-1}  \\m_1 & m_0 \\ m_2 & m_1 \end{bmatrix}=\begin{bmatrix} 1 & 1\\ 
z_0 &z_1 \\ z_0^2 & z_1^2 \end{bmatrix} \begin{bmatrix} p_0 & \\  &p_1 
\end{bmatrix} \begin{bmatrix} 1 & z_0^{-1} \\ 1& z_1^{-1}  
\end{bmatrix}$, $\mathbf{Y}_2=\begin{bmatrix} m_{-1} & m_{-2}  \\ 
m_0 & m_{-1} \\ m_1 & m_0 \end{bmatrix}=
\begin{bmatrix} 1 & 1\\ z_0 &z_1 \\ 
z_0^2 & z_1^2 \end{bmatrix}
\begin{bmatrix} p_0 & \\ &p_1 \end{bmatrix}
\begin{bmatrix} z_0^{-1} & \\ &z_1^{-1}\end{bmatrix}
\begin{bmatrix} 1 & z_0^{-1} \\ 1& z_1^{-1}  
\end{bmatrix}$. Using the Vandermonde decomposition, we can get $\mathbf{Y}_2-\lambda \mathbf{Y}_1=\frac{1}{a+1}\mathbf{U}_{a+1}\mathbf{C} (\text{diag}(z_j)^*-\lambda \mathbf{I})\mathbf{U}_{a}^*$, so the Theorem 3 can be proved.
\end{proof}

If the rank$(\mathbf{Y})=a$, the set of generalized eigenvalues of $\mathbf{Y_2}-\lambda\mathbf{Y_1}$ is equal to the set of nonzero$^2$ ordinary eigenvalues of $(\mathbf{Y_1}^+)\mathbf{Y_2}$. It is most likely that the rank$(\mathbf{Y})<a$, it is necessary to compute the SVD of the $\mathbf{Y}$, $\mathbf Y = \widetilde{\mathbf V} \Sigma \mathbf V ^*$, then we can use the Matrix Pencil Method to deal with the Matrix $\mathbf{V}$ afterword. For details, please refer to paper \cite{IEEEexample:Hua1990}, \cite{IEEEexample:Chiu2014}.

Step 3: Obtain effective frequency values $p_j$'s in bucket $i$.

Solution: In order to use the CS method, we need several random sampling. So for $P$ number of random numbers $r_1,\dots,r_P$, we calculate  $\hat{y}_{B,\tau}=\mathbf{F}_{B}\mathbf{D}_{L}\mathbf{S}_{\tau}x$ for set $\tau =\left\{ \tau_0=r_1,\tau_1=r_2,\dots, \tau_{P-1}=r_{P}          \right\}$ in $P$ times' round. Suppose in bucket $i$, the number of significant frequencies $a$ and approximate effective frequency position $z_j$'s have been known by step 1 and step 2; we can get Equation (\ref{Eq18}). (There are maybe errors for $z_j$'s obtained by step 2 because of interference of other $L-a$ number of Prony components.)  
\begin{equation}
\begin{bmatrix}
m_1\\ \dots\\ m_P\end{bmatrix}_{P\times 1}=\begin{bmatrix}
z_{0}^{r_{1}} &\dots  & z_{L-1}^{r_1}\\  \dots& \dots & \dots\\ 
z_{0}^{r_{P}} & \dots & z_{L-1}^{r_{P}}\end{bmatrix}_{P\times L} 
\begin{bmatrix}p_0\\ \dots \\ p_{L-1}\end{bmatrix}_{L\times 1}\approx 
\begin{bmatrix}z_{0}^{r_{1}} &\dots  & z_{a-1}^{r_1}\\ 
 \dots& \dots & \dots\\ z_{0}^{r_{P}} & \dots & z_{a-1}^{r_{P}}
\end{bmatrix}_{P\times a} \begin{bmatrix}p_0\\ \dots \\ p_{a-1}
\end{bmatrix}_{a\times 1}		\label{Eq18}
\end{equation}

The Equation (\ref{Eq18}) is very likely similar to the CS formula. The model of CS is formulated as $y\approx \mathbf{\Phi} S$, where $S$ is a sparse signal, $\mathbf{\Phi}$ is a sensing matrix, $y$ is the measurements. In Equation (\ref{Eq18}), $y$ is a vector of $P \times 1$ by $P$ measurements, $\mathbf{\Phi}$ is a matrix of $P \times L$, $S$ is a vector of $L \times 1$ which is $a$-sparse. It should be noted that $\mathbf{\Phi}$ must satisfy either the restricted isometry property (RIP) or mutual incoherence property (MIP) for the successful recovery with high probability. It has been known that the Gaussian random matrix and partial Fourier matrix are good candidates to be $\mathbf{\Phi}$, so the $\mathbf{\Phi}$ of the Equation (\ref{Eq18}) meets the criteria. Furthermore, the number of measurements $P$ one collect should more than $a\text{log}_{10}L$, so that these measurements will be sufficient to recover signal $x$. 

In order to obtain $p_j$'s by CS solver, we use the subspace pursuit method. The process is as follows: 1) Through the positions $f_j$'s gained by step 2, get the possible value of $z_j$'s as follows ${z}'_j=\omega^{f_j}$, ${z}''_j=\omega^{f_j+B}$, ${z}'''_j=\omega^{f_j-B}$, then obtain 3$a$ vectors as follows: $\{ {{z}'_0}^{r_1} ,\dots,{{z}'_0}^{r_P} \}^T,\{ {{z}''_0}^{r_1} ,\dots,{{z}''_0}^{r_P} \}^T$,
$\{ {{z}'''_0}^{r_1} ,\dots,{{z}'''_0}^{r_P} \}^T,\dots\{ {{z}'''_{a-1}}^{r_1} ,\dots,{{z}'''_{a-1}}^{r_P} \}^T$. An (over-complete) dictionary can be characterized by a matrix $\mathbf{D}$, and it contains $3a$ vectors listed above. (one wishes one-third vectors of them form a basis). Each (column) vector in a dictionary is called an atom. 2) From $3a$ atoms of the dictionary matrix $\mathbf{D}$, find $a$ number of atoms that best match the measurements' residual error. Select these $a$ number of atoms to construct a new sensing matrix $\hat{\mathbf{\Phi}}$. 3) Obtain $\hat S$ by the support of $\hat S= \text{argmin}\left \| y-\hat{\Phi } \hat S \right \|_2$ through the least square method(LSM). 4) If the residual error $\left \| y-\hat{\Phi } \hat S \right \|_2$ meets the requirement, or the number of iterations reaches the assumption, or the residual error becomes larger, the iteration will be quit, otherwise continue to step 2. After computing, we get the final sensing matrix $\hat{\mathbf{\Phi}}$ of size $P \times a$ and sparse signal $\hat S$ of size $a$ just in the right part of Equation (\ref{Eq18}). So we get effective frequency positions $z_j$'s and effective frequency values $p_j$'s in bucket $i$.

\subsection{Peeling framework based on the bipartite graph}
Secondly, we introduce the peeling framework which can recover all $K$ significant frequencies layer by layer. The block diagram of the peeling framework is shown in Figure \ref{fig2}.
\begin{figure}[H]
\centering
\includegraphics[width=15 cm]{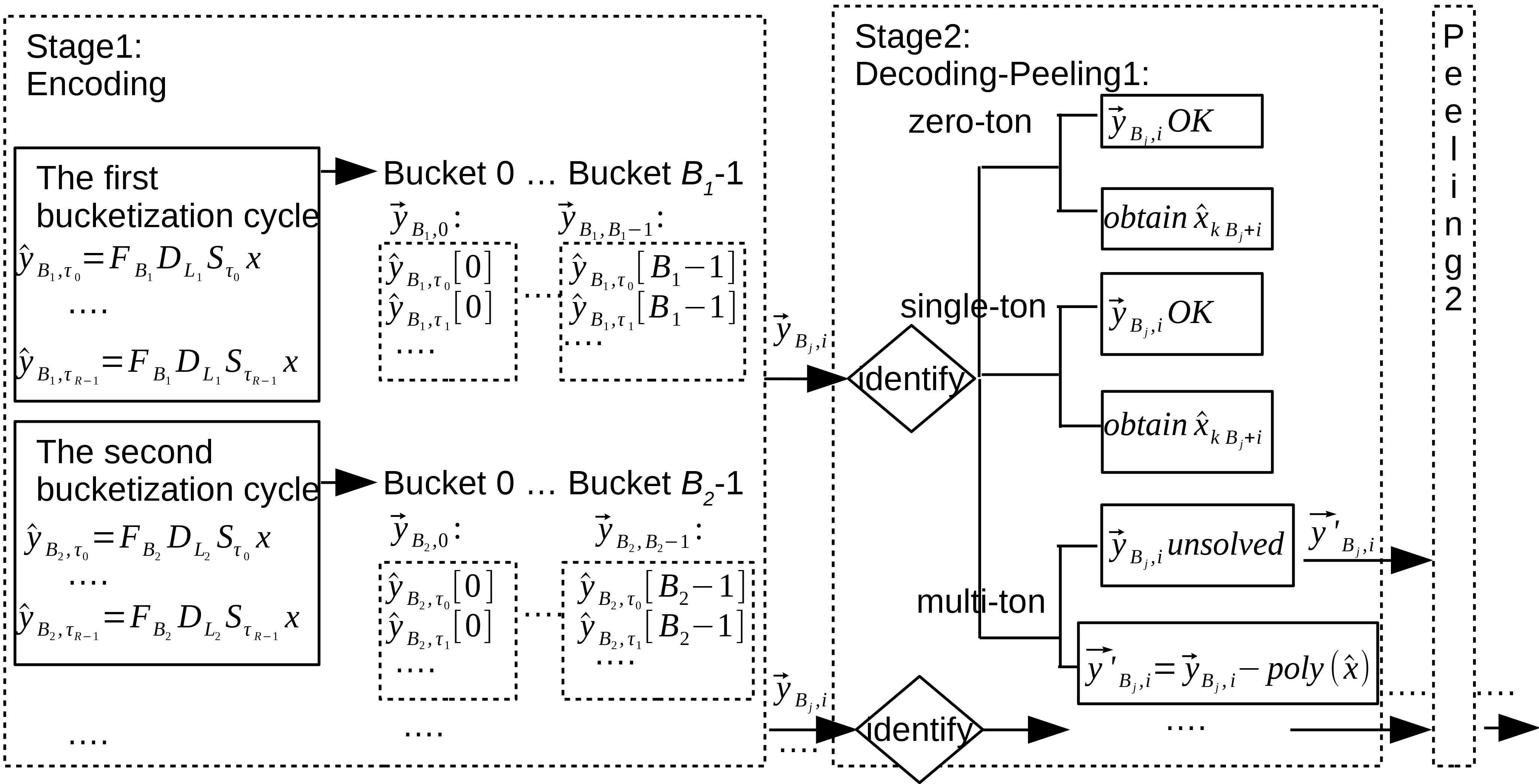}
\caption{A system block diagram of the peeling framework.\label{fig2}}
\end{figure}  
In the peeling framework, we require the size $N$ of signal $x$ to be a product of a few(typically three or more) co-prime numbers $p_i$'s. For example $N=p_1p_2p_3$ where $p_1, p_2, p_3$ are co-prime numbers. With this precondition, we determine $L_i$'s and $B_i$'s gained from $p_i$'s in each bucketization cycle. In every cycle, we use the same set $\tau =\left\{ \tau_0=r_1,\tau_1=r_2,\dots, \tau_{R-1}=r_{R}          \right\}$ inspired by different spectrum reconstruction ways introduced later to calculate $\hat{y}_{B,\tau}=\mathbf{F}_{B}\mathbf{D}_{L}\mathbf{S}_{\tau}x$ in $R$ times' rounds(it also means $R$ delay chains for one bucket). Suppose there are $d$ times' cycles, after $d$ cycles, stage 1 encoding by bucketization is completed. In order to solve the aliasing problems, the filtered spectrum in bucket $i$ of the No.$j$' cycle is denoted by vector $\overrightarrow{y}_{B_j,i}\in \mathbb{C}^{R \times 1}$ as follows:
\begin{equation}
\overrightarrow{y}_{B_j,i}=\begin{bmatrix}
\hat{y}_{B_j,\tau_0}[i]\\ 
\dots\\ 
\hat{y}_{B_j,\tau_{R-1}}[i]
\end{bmatrix}=\begin{bmatrix}
\sum_{k=0}^{L_j-1}\hat x _{(kB_j+i)} \omega^{\tau_0(kB_j+i)}\\ 
\dots\\ 
\sum_{k=0}^{L_j-1}\hat x _{(kB_j+i)} \omega^{\tau_{R-1}(kB_j+i)}
\end{bmatrix}		\label{Eq19}
\end{equation}

We use a simple example to illustrate the process of encoding by bucketization. Consider a twenty($N$ = 20) size signal $x$ that has only five($K$ = 5) significant coefficients $\hat x _1,\hat x _3,\hat x _5,\hat x _{10},\hat x _{13}>>0$, while the rest of the coefficients are approximately equal to zero. With this precondition, there are two bucketization cycles. In the first cycle, for $B_1=4$ and $L_1=5$, we get four vectors $\{\overrightarrow{y}_{4,0},\overrightarrow{y}_{4,1},\overrightarrow{y}_{4,2},\overrightarrow{y}_{4,3}\}$ respecting the filtered spectrum in four buckets for set $\tau $ in $R$ times' rounds. In the second cycle, for $B_2=5$ and $L_2=4$, we get five vectors $\{\overrightarrow{y}_{5,0},\overrightarrow{y}_{5,1},\overrightarrow{y}_{5,2},\overrightarrow{y}_{5,3},\overrightarrow{y}_{5,4}\}$. After the bucketization, we can construct a bipartite graph shown in Figure \ref{fig3} through Equation (\ref{Eq19}). In Figure \ref{fig3}, there are $N=20$ variable nodes on the left(referred to twenty coefficients of $\hat{x}$) and $N_b=B_1+B_2=4+5=9$ parity check nodes on the right(referred to nine buckets in two cycles). The values of the parity check nodes on the right are approximately equal to the complex sum of the values of variable nodes(its left neighbors) through Equation (\ref{Eq19}). In these check nodes, some have no significant variable node, as no left neighbor, is called a 'zero-ton' bucket(three blue-colored check nodes). Some have exactly one significant variable node, as one left neighbor, is called a 'single-ton' bucket(three green-colored check nodes). Others have more than one significant variable nodes, as more than one left neighbors, is called a 'multi-ton' bucket(three red-colored check nodes).
\begin{figure}[H]
\centering
\includegraphics[width=5 cm]{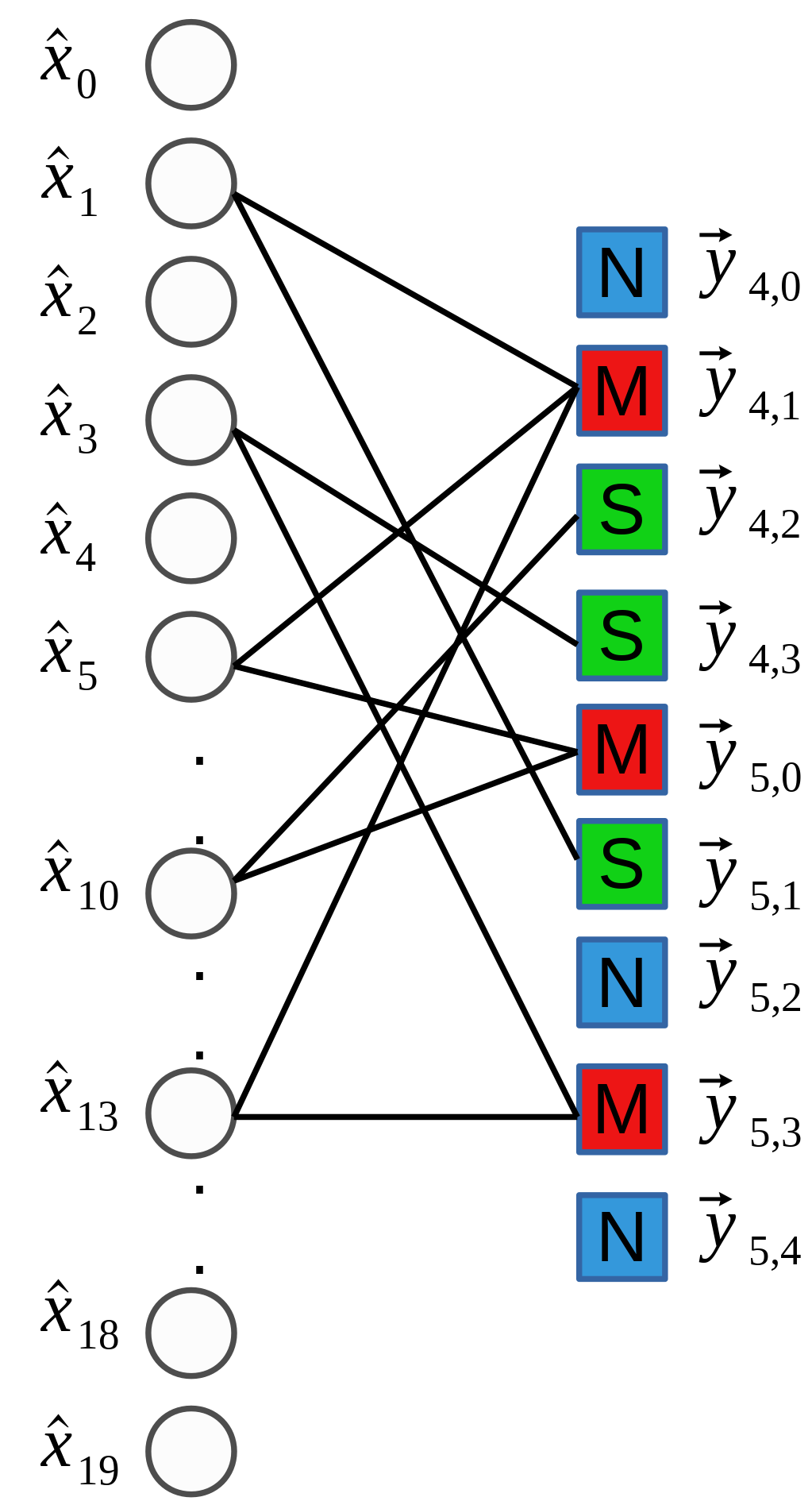}
\caption{An example of bipartite graph with $N=20$ variable nodes on the left and $N_b=9$ parity check nodes on the right(blue square with "N" means 'zero-ton' bucket, green square with "S" means 'single-ton' bucket, red square with "M" means 'multi-ton' bucket).\label{fig3}}
\end{figure}  

After bucketization, the subsequent problem is how to recover the spectrum from all buckets gained from several cycles. Through the identification of vector $\overrightarrow{y}_{B_j,i}$, we can determine the characteristics of bucket $B_j[i]$. If the bucket is a 'zero-ton' bucket, then $\hat x_{(kB_j+i)}=0   \  \text{for} \  k \in[0,L_j-1] $, so the problem of frequency recovery in this bucket can be solved. If the bucket is a 'single-ton' bucket, suppose the one effective frequency position $f_0$ and the one effective frequency value $\hat{x}_{f_{0}}$ can be obtained by the afterward methods, then $\hat x_{(kB_j+i)}=\left\{\begin{matrix}
0   \ \ \text{for} \ \ (k \in[0,L_j-1]) \cap (kB_j+i  \neq f_0)\\ 
\hat x_{f_0}   \ \ \text{for} \ \ kB_j+i =f_0
\end{matrix}\right.   $, so the frequency recovery in this bucket can be solved. If the bucket is a 'multi-ton' bucket, it is necessary to separate the 'multi-ton' bucket into the 'single-ton' bucket to realize the decoding of the 'multi-ton' bucket. For example, after the first peeling, in the bucket $i$ of the No.$j$' cycle, suppose $\hat{x}_{f_0},\dots \hat{x}_{f_{n_1-1}}$ have been got by other 'single-ton' bucket, then the vector $\overrightarrow{y}_{B_j,i}$ respecting original bucket changed to $\overrightarrow{y}'_{B_j,i}$,  where $\overrightarrow{y}'_{B_j,i}=\overrightarrow{y}_{B_j,i}-\text{poly}(\hat{x}_{f_0})-\dots -\text{poly}(\hat{x}_{f_{n_{1}-1}})$ respecting remaining frequencies in the bucket. Through the identification of vector $\overrightarrow{y}'_{B_j,i}$, we can analyze the remaining elements in this bucket; if it is a 'zero-ton' bucket or a 'single-ton' bucket, we can stop peeling and get all frequencies in this bucket. If not, continue the second peeling, suppose $\hat{x}_{f_{n_1}},\dots \hat{x}_{f_{n_2-1}}$ can be got by other 'single-ton' bucket through new peeling. We can identify $\overrightarrow{y}''_{B_j,i}$ to continue to analyze the bucket, where $\overrightarrow{y}''_{B_j,i}=\overrightarrow{y}'_{B_j,i}-\text{poly}(\hat{x}_{f_{n_1}})-\dots -\text{poly}(\hat{x}_{f_{n_{2}-1}})$. After $q$ times' peeling, the problem of frequency recovery in the 'multi-ton' bucket can be solved as follows:
\begin{equation}
\hat x_{(kB_j+i)}=\left\{\begin{matrix}
0   \ \ \text{for} \ \ (k \in[0,L_j-1]) \cap (kB_j+i  \neq f_0) \cap \dots \cap (kB_j+i  \neq f_{n_q-1})\\ 
\hat x_{f_0}   \ \ \text{for} \ \ kB_j+i =f_0 \\
\dots
\\
\hat x_{f_{n_q-1}}   \ \ \text{for} \ \ kB_j+i =f_{n_q-1}
\end{matrix}\right.  \label{Eq20}
\end{equation}

If the frequency recovery in all buckets can be solved, we can finish the spectrum reconstruction. The peeling-decoder successfully recovers all the frequencies with high probability under these three following assumptions: 1) 'Zero-ton'; 'single-ton'; and 'multi-ton' buckets can be identified correctly.  2) If the bucket is a 'single-ton' bucket, the decoder can locate the effective frequency position $f_0$ and estimate the effective frequency value $\hat{x}_{f_{0}}$ correctly. 3) It is sufficient to cope with 'multi-ton' buckets by the peeling platform.

Subproblem 1: How to identify vector $\overrightarrow{y}_{B_j,i}$ to distinguish bucket $B_j[i]$?

Solution: In the exactly sparse case, if $ \left \| \overrightarrow{y}_{B_j,i} \right \|_2=0$, the bucket  is a 'zero-ton' bucket. If the bucket  is not a 'zero-ton' bucket, make the second judgment. The way to make the second judgment about whether the bucket is a 'single-ton' bucket or not is to judge  $\left \|\overrightarrow{y}_{B_j,i}-\text{poly}(\hat{x}_{f_0})\right \|_2=0 \ \text {or} \ \neq 0$, where $\hat x_{f_0}$ is gained from the solution of subproblem 2. If the bucket $B_j[i]$ is not a 'zero-ton' bucket either a 'single-ton' bucket, it is a 'multi-ton' bucket. In the general sparse case, the two equations are translated to $ \left \| \overrightarrow{y}_{B_j,i} \right \|_2\leq T_0$ and $\left \|\overrightarrow{y}_{B_j,i}-\text{poly}(\hat{x}_{f_0}) \right \|_2\leq  T_1$, where $T_0$ and $T_1$ are the identification thresholds. It costs $O(R)$ runtime to identify vector $\overrightarrow{y}_{B_j,i}$ by knowing $\hat x_{f_0}$ in advance. After $d$ times' cycles, it costs $O(R(B_1+\dots+B_d))$ runtime for the identification in the first peeling.

Subproblem 2: Suppose the target is a 'single-ton' bucket, how to recover the one frequency in this bucket?

Solution: In the exactly sparse case, we use $R = 3$ and the set $\tau =\left\{ \tau_0=0,\tau_1=1,\tau_2=2          \right\}$ to calculate $\hat{y}_{B,\tau}=\mathbf{F}_{B}\mathbf{D}_{L}\mathbf{S}_{\tau}x$ in three rounds. Suppose the bucket $B_j[i]$ is a 'single-ton' bucket, three elements of the vector $\overrightarrow{y}_{B_j,i}$ are $\hat{y}_{B_j,0}[i]=\hat x_{f_0} \omega^{0\cdot f_0}$, $\hat{y}_{B_j,1}[i]=\hat x_{f_0} \omega^{1\cdot f_0}$ and $\hat{y}_{B_j,2}[i]=\hat x_{f_0} \omega^{2\cdot f_0}$. So only if $ \frac{\hat{y}_{B_j,0}[i]}{\hat{y}_{B_j,1}[i]}= \frac{\hat{y}_{B_j,1}[i]}{\hat{y}_{B_j,2}[i]}$, it is a single-ton bucket. If it is a single-ton bucket, the position $f_0$ can be obtained by $ f_0=\angle \frac{\hat{y}_{B_j,1}[i]}{\hat{y}_{B_j,0}[i]}\cdot (\frac{-N}{2\pi})$, and the value $\hat x_{f_0}$ can be obtained by $\hat x_{f_0} = \hat{y}_{B_j,0}[i]$. In all, it costs three samples and four runtime to recover the one significant frequency in one bucket. 

In the general sparse case, the frequency recovery method is the optimal whitening filter coefficients of the Minimum Mean Squared Error(MMSE) estimator and sinusoidal structured bin-measurement matrix for the speedy recovery. At first, the set of the one candidate $f_0$ is $f_0 \in \left \{ i,i+L,\dots,i+(L-1)B) \right \}$, and there are $L$ possible choices. In the first iteration of binary-search, we choose a random number $r_0$, then calculate $\hat{y}_{B_j,r_0},\hat{y}_{B_j,r_0+1},\dots,\hat{y}_{B_j,r_0+m-1}$ in $m$ times' rounds. In fact, we obtain $\hat{y}_{B_j,r_0}[i] \approx  \hat x_{f_0} \omega^{(r_0)\cdot f_0},\hat{y}_{B_j,r_0+1}[i] \approx  \hat x_{f_0} \omega^{(r_0+1)\cdot f_0},\dots,\hat{y}_{B_j,r_0+m-1}[i] \approx  \hat x_{f_0} \omega^{(r_0+m-1)\cdot f_0}$. $\Delta _t $ respecting the phase difference is defined as $\Delta _t = \angle \hat{y}_{B_j,r_0+t+1}[i] - \angle \hat{y}_{B_j,r_0+t}[i] = \angle \omega ^ {f_0}+U_{t+1}-U_{t}$ where $U_t$ respecting real error in No.$t$' round. In paper \cite{IEEEexample:Kay1989}, we can see the Maximum Likelihood Estimate(MLE) $\angle \hat \omega ^ {f_0}$ of $\angle \omega ^ {f_0}$ is calculated by Equation (\ref{Eq21}), where $w_t$ is defined as Equation (\ref{Eq22}). After obtaining $\angle \hat \omega ^ {f_0}$, make a judgment of binary-search; if $\angle \hat \omega ^ {f_0} \in[0,\pi]$, there is a new restriction that $f_0 \in [0,N/2-1]$, otherwise the restriction is $f_0 \in [N/2,N-1]$ the complement set of set $[0,N/2-1]$. The next iteration is very similar, we choose a random number $r_1$, then calculate $\hat{y}_{B_j,r_1}[i],\hat{y}_{B_j,r_1+2}[i],\dots,\hat{y}_{B_j,r_1+2(m-1)}[i]$ in bucket $i$. After obtaining $\angle \hat \omega ^ {2f_0}$ through Equation (\ref{Eq21}) and Equation (\ref{Eq22}), make a judgment of binary-search; if $\angle \hat \omega ^ {2f_0} \in[0,\pi]$, there is a new restriction that $f_0 \in [0,N/4-1]\cup [N/2,3N/4-1]$, otherwise the restriction is the complement set of the previous set. After $C$ number of iterations, in the advantage of restrictions, we can locate the only one position $f_0$ from the original set $ \left \{ i,i+B,\dots,i+(L-1)B) \right \}$. From the paper \cite{IEEEexample:Pawar2015}, if the number of iterations $C = O(\text{log} N)$ and the number of rounds per iteration $m = O(\text{log}^{1/3}N)$, the singleton-estimator algorithm can correctly identify the unknown frequency $f_0$ with a high probability. If the signal noise ratio(SNR) is low, the $m$ and $C$ must be increased. After knowing $f_0$, we can obtain $\hat x_{f_0} $ by applying the LSM. In all, to recover the one approximately significant frequency in one bucket, it needs $Cm = O(\text{log}^{4/3}N)$ samples and $3Cm = O(\text{log}^{4/3}N)$ runtime. The runtime includes $Cm$ runtime to calculate all $\Delta _t $'s and $2Cm$ runtime to calculate all $\angle \hat \omega ^ {2^j{f_0}}$'s. $\hat x_{f_0}$ can be used to judge whether the bucket is a 'single-ton' bucket or not through the discriminant $\overrightarrow{y}_{B_j,i}-\text{poly}(\hat{x}_{f_0})\leq T_1$.
\begin{equation}
\angle \hat \omega ^ {f_0}=\sum_{t=0}^{m-2}w_t\Delta _t \label{Eq21}
\end{equation}
\begin{equation}
w_t=\frac{3m}{2(m^2-1)}(1-4(\frac{2t-m+2}{2m})^2) \label{Eq22}
\end{equation}

Subproblem 3: How to solve the 'multi-ton' buckets by the peeling platform?

Solution with method 1: The genie-assisted-peeling decoder by the bipartite graph is useful. As shown in Figure \ref{fig3}, the bipartite graph represents the characteristics of the bucketization. The variable nodes on the left represent the characteristics of the frequencies. The parity check nodes on the right represent the characteristics of the buckets. Every efficient variable node connects $d$ different parity check nodes as its neighbor in $d$ time's cycles; it respects the $d$ edges connected to each efficient variable node. The example of the process of the genie-assisted-peeling decoder is shown in Figure \ref{fig4}, and the steps are as follows:  

Step 1: We can identify all the indistinguishable buckets. If the bucket is a 'zero-ton' bucket, the frequency recovery in this bucket is finished. If the bucket is a 'single-ton' bucket, we can obtain the frequency in the bucket through the solution of subproblem 2. In the graph, these operations represent to select and remove all right nodes with degree 0 or degree 1, and moreover, to remove the edges connected to these right nodes and corresponding left nodes.

Step 2: We should remove the contributions of these frequencies gained by step 1 in other buckets. For example, the first peeling in bucket $i$ means we calculate a new vector $\overrightarrow{y'}_{B_j,i}$ just as $\overrightarrow{y'}_{B_j,i}=\overrightarrow{y}_{B_j,i}-\text{poly}(\hat{x}_{f_0})-\dots -\text{poly}(\hat{x}_{f_{n_{1}-1}})$, where $\hat{x}_{f_0},\dots \hat{x}_{f_{n_1-1}}$ have been gained by step 1. In the graph, these operations represent to remove all the other edges connected to the left nodes removed in step 1. When the edges are removed, their contributions are subtracted from their right nodes. In the new graph, the degree of some right nodes decreases as well.

Step 3: If all buckets have been identified, we successfully reconstruct the spectrum $\hat x$. Otherwise, we turn to step 1 and step 2 to continue identifying and peeling operations. In the graph, it means if all right nodes are removed, the decoding is finished. Otherwise, turn to step 1 and step 2.
\begin{figure}[H]
\centering
\includegraphics[width=14 cm]{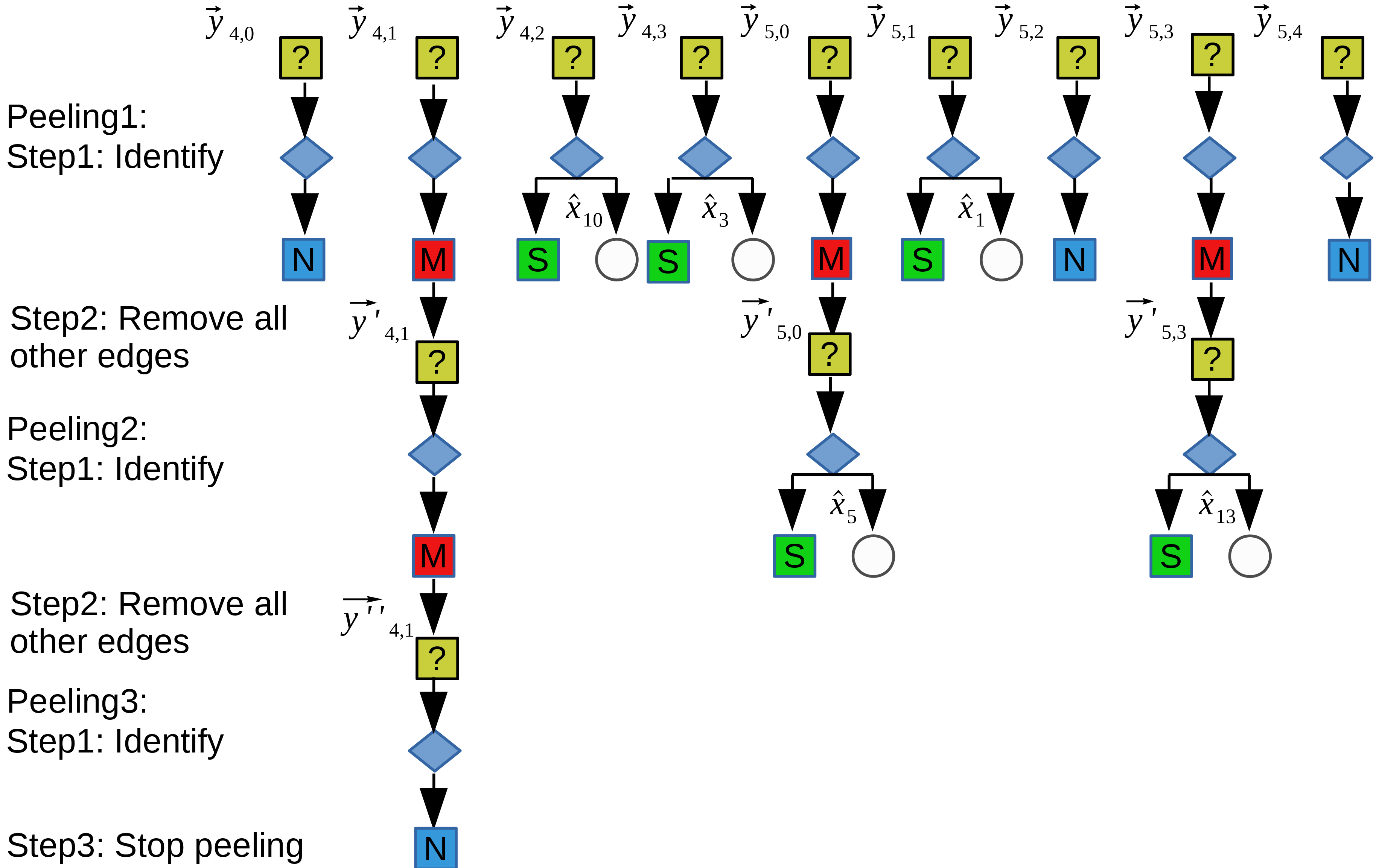}
\caption{The example of the process of the genie-assisted-peeling decoder.	\label{fig4}}
\end{figure} 

Solution with method 2: The sparse-graph decoder by packet erasure channel method is more efficient. From Figure \ref{fig4}, we see all processes need thirteen(9+3+1) identifications in three peelings, so it is not very efficient. As we can see, spectrum recovery can transform into a problem of decoding over sparse bipartite graphs using the peeling platform. The problem of decoding over sparse bipartite graphs has been well studied in the coding theory literature. From the coding theory literature, we know that several sparse-graph code constructions are low-complexity and capacity-achieving by using the erasure channels method. So we use the erasure channels method to improve efficiency. 

We construct a bipartite graph with $K$ variable nodes on the left and $N_b$ check nodes on the right. Each efficient left node $\hat x_f$ connects exactly $d$ right nodes $\overrightarrow{y}_{B_j,i}$'s in $d$ times' cycles and the set of $N_b$ check nodes are assigned to $d$ subsets with the No.$i$'s subset having $B_i$ check nodes. The example of the "balls-and-bins" model defined as above is shown in Fig 2. In Figure \ref{fig3}, there are $K(K=5)$ variable nodes on the left and $N_b(N_b=9)$ check nodes on the right and each left node $\hat x_f$ connects exactly $d(d=2)$ right nodes, the set of check nodes are assigned to two subsets($\{\overrightarrow{y}_{4,0},\overrightarrow{y}_{4,1},\overrightarrow{y}_{4,2},\overrightarrow{y}_{4,3}\},\{\overrightarrow{y}_{5,0},\overrightarrow{y}_{5,1},\overrightarrow{y}_{5,2},\overrightarrow{y}_{5,3},\overrightarrow{y}_{5,4}\}$). If one left node is selected, $d$ number of its neighboring right nodes will be determined. If one right node is selected, $L$ number of its neighboring left nodes will be determined as well. These two corresponding nodes are connected through an edge(In the graph, if the variable nodes on the left is an efficient node, the edge is a solid line. Otherwise, the edge is a dotted line or be omitted). A directed edge $\overrightarrow{e}$ in the graph is represented as an ordered pair of nodes such as $\overrightarrow{e}=\{  V,C  \}$ or $\overrightarrow{e}=\{  C,V   \}$, where $V$ is a variable node and $C$ is a check node. A path in the graph is a directed sequence of directed edges such as $\left\{ \overrightarrow{e_1}, \dots, \overrightarrow{e_l} \right\}$, where the end node of the previous edge of the path is the start node of the next edge of the path. Depth $l$ is defined as the length of the path also means the number of directed edges in the path. The induced subgraph $N_{V}^{l}$ is the directed neighborhood of depth $l$ of left node $V$. It contains all the edges and nodes on paths $\overrightarrow{e_1}, \dots, \overrightarrow{e_l}$ starting at node $V$. It is a tree-like graph which can be seen in Figure \ref{fig5}. In Figure \ref{fig5}, it shows subgraph $N_{V}^{l}$ starting at $\hat x_2$ with depth $l$ is equal to two, three, four, five, six, respectively and subgraph $N_{V}^{l}$ starting at $\hat x_7$ with depth $l$ is equal to six. Under these definitions, we get the steps of packet erasure channel method as follows:

Step 1: Take $O(K)$ random left nodes as start-points, and draw $O(K)$ trees $N_{V}^{1}$ from these start-points. Such as in Figure \ref{fig5}, we choose two left nodes $\hat x_2$ and $\hat x_7$ as start-points. The endpoints of $N_{V}^{1}$ are check nodes, then identify the characteristics of these check nodes. If the check node is a 'zero-ton' bucket, such as $\overrightarrow{y}_{5,2}$, this path stops extending. If the check node is a 'multi-ton' bucket, continue waiting until its left neighboring node identified by other paths. If the check node is a 'single-ton' bucket, such as $\overrightarrow{y}_{4,2}$ and $\overrightarrow{y}_{4,3}$, continue to connect its only one efficient left neighboring node. Then get the tree $N_{V}^{2}$ through expending these left nodes. Their $(d-1)$ number of new right neighboring nodes will be determined through these left nodes, then get the tree $N_{V}^{3}$ by expending these right nodes.

$\dots$

Step $p$: For each start-point $V$, we have got tree $N_{V}^{2p-1}$ from the last step. The endpoints of $N_{V}^{2p-1}$ are check nodes. We should remove the contributions of some frequencies gained by the previous paths at first, then identify the characteristics of these check nodes modified. For example, before identifying $\overrightarrow{y}_{5,0}$, the endpoint of $N_{\hat{x}_2}^{3}$, we should remove the contributions of $\hat x_{10}$. Furthermore, before identifying $\overrightarrow{y}_{4,1}$, the endpoint of $N_{\hat{x}_2}^{5}$, we should remove the contributions of $\hat x_{13}$ and $\hat x_{5}$. If the check node modified is a 'zero-ton' bucket, this path stop extending. If the check node modified is a 'multi-ton' bucket, continue waiting until its left neighboring node identified by other paths. If the check node modified is a 'single-ton' bucket, continue to connect its only one efficient left neighboring node. Then get the tree $N_{V}^{2p}$ through expending these left nodes. Their $(d-1)$ number of new right neighboring nodes will be determined through these left nodes, then get the tree $N_{V}^{2p+1}$ by expending these right nodes.

If the number of left nodes identified equal to $K$, it means the spectrum recovery has been successful. The example is shown in Figure \ref{fig5}. In Figure \ref{fig5}, from the beginning of start-point $\hat x_2$ and $\hat x_7$, we can get two trees $N_{\hat{x}_2}^{6}$ and $N_{\hat{x}_7}^{6}$(depth=6) through three expanding by three steps. From the graph, we can obtain all five variable nodes. All processes need six(4+4-2) identifications, far less than thirteen identifications of method 1.
\begin{figure}[H]
\centering
\includegraphics[width=8.5 cm]{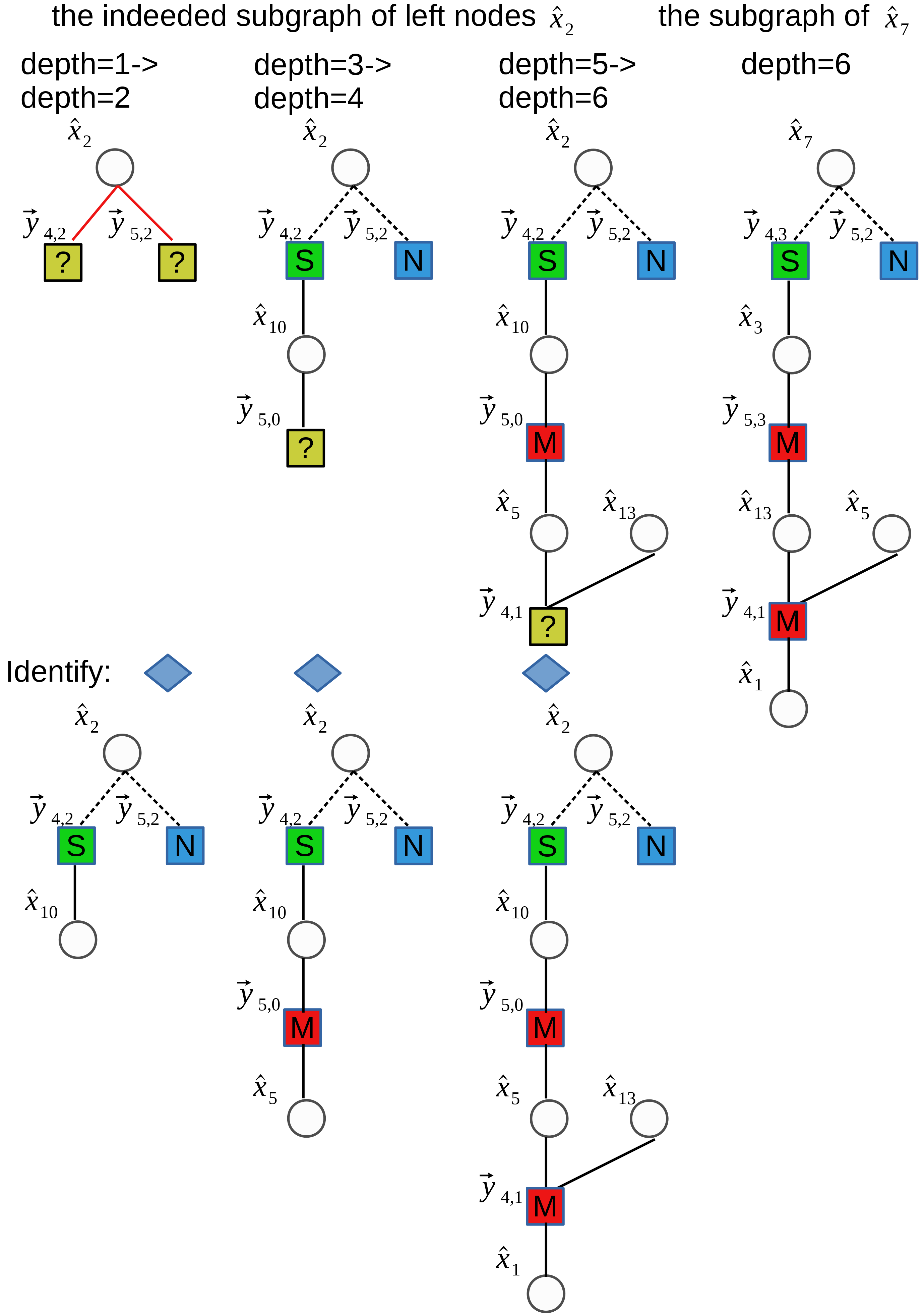}
\caption{The example of the process of the sparse-graph decoder by packet erasure channel method.	\label{fig5}}
\end{figure} 

\subsection{Iterative framework based on the binary tree search method}
Thirdly, we introduce the iterative framework based on the binary tree search method. The example is shown in Figure \ref{fig6}. 
\begin{figure}[H]
\centering
\includegraphics[width=15 cm]{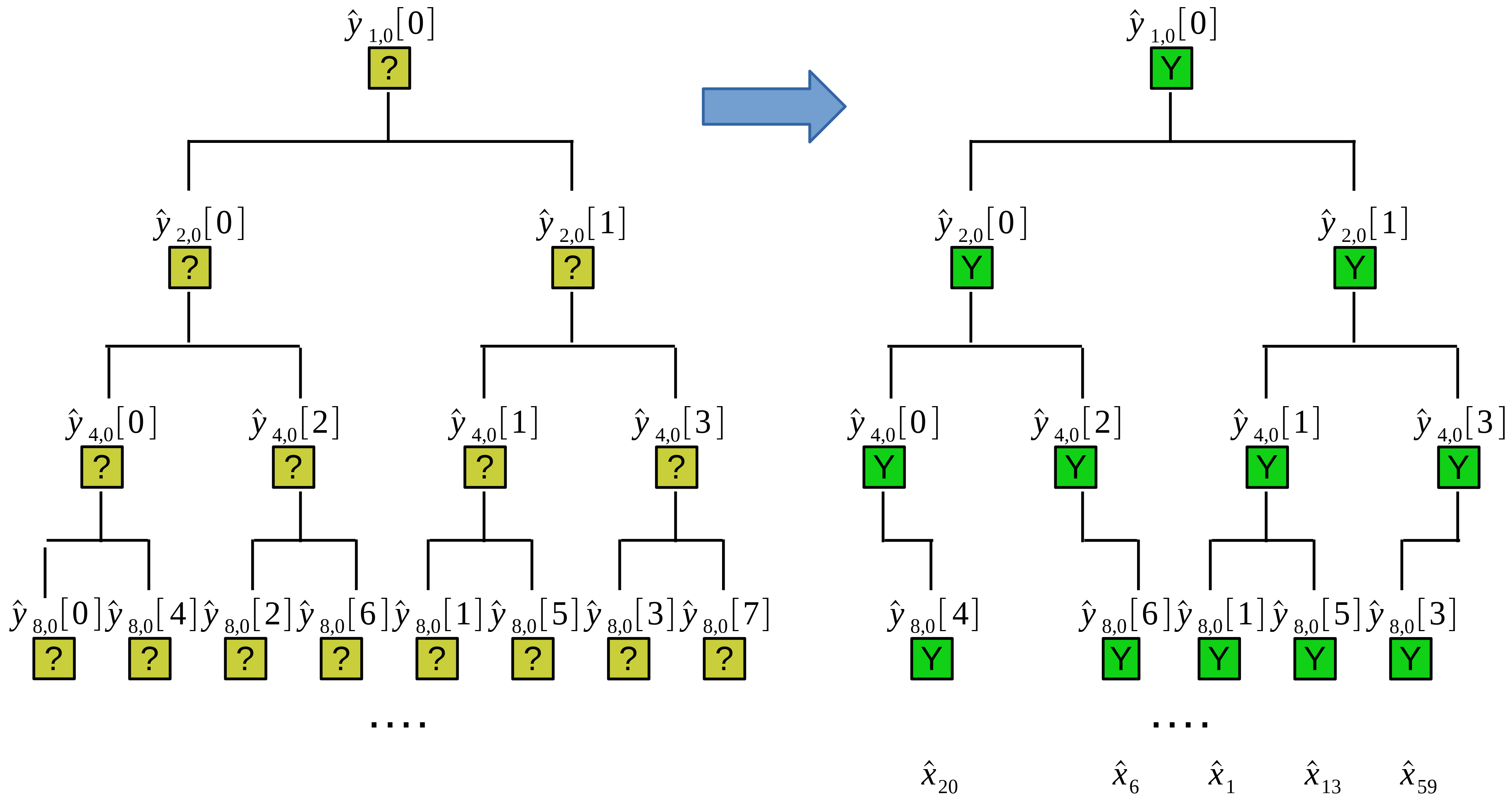}
\caption{The example of the iterative framework based on the binary tree search method.	\label{fig6}}
\end{figure} 
Consider a sixty-four($N = 64$) size signal $\hat{x}$ that has only five($K = 5$) significant coefficients $\hat x _1,\hat x _6,\hat x _{13},\hat x _{20},\hat x _{59}>>0$, while the rest of the coefficients are approximately equal to zero. The node of the first layer of the binary tree is $\hat y _{1,0 }[0]=\sum_{j=0}^{N-1}{\hat {x} _{j} }$, which aliasing $N$ frequencies. The nodes of the second layer of the binary tree are $\hat y _{2,0 }[0]=\sum_{j=0}^{N/2-1}{\hat {x} _{2j} }$ and $\hat y _{2,0 }[1]=\sum_{j=0}^{N/2-1}{\hat {x} _{2j+1} }$, which aliasing $N/2$ frequencies. The nodes of the third layer of the binary tree are $\hat y _{4,0 }[0],\hat y _{4,0 }[1],\hat y _{4,0 }[2],\hat y _{4,0 }[3]$, which aliasing $N/4$ frequencies. The nodes of the No.$(d+1)$'s layer of the binary tree are $\hat y _{2^d,0 }[0],\hat y _{2^d,0 }[1]\dots\hat y _{2^d,0 }[2^d-1]$, which aliasing $N/(2^d)$ frequencies.

The insight of binary tree search is as follows: 1) The frequency of aliasing is gradually dispersed with the expansion of binary tree layers. The number of efficient nodes of No.$(d+1)$'s layer is defined as $m_d$ and the final number will be approximately equal to the sparse number $K$ with the expansion. In Figure \ref{fig6}, $m_0=1, m_1=2, m_2=4, m_3=5$, and the final number of efficient nodes of No.4 's layer $m_3$ is equal to the sparse number $K$. 2) If the parent node exists, there may be one or two child nodes. If the parent node does not exist, there is no need to continue the binary search for this node. For the example of Figure \ref{fig6}, parent node $\hat y _{1,0 }[0]$ exists, so at least one of the two child nodes $\hat y _{2,0 }[0]$ and $\hat y _{2,0 }[1]$ exists. By the same reason, parent node $\hat y _{2,0 }[0]$ and $\hat y _{2,0 }[1]$ exist, so at least one of the two child nodes  $\hat y _{4,0 }[0]$ and $\hat y _{4,0 }[2]$ exists, and at least one of the two child nodes $\hat y _{4,0 }[1]$ and $\hat y _{4,0 }[3]$ exists. On the contrary, parent node $\hat y _{8,0 }[0]$ does not exist, so its two child nodes $\hat y _{16,0 }[0]$ and $\hat y _{16,0 }[8]$ does not exist either. Inspired by these two ideas, the steps of of binary tree search are as follows:

Step 1: Calculate the node $\hat y _{1,0 }[0]$ of the first layer, get $m_0$ and efficient nodes' distribution in this layer.

Step 2: According to the last result, calculate the node $\hat y _{2,0 }[0]$ and $\hat y _{2,0 }[1]$ of the second layer selectively, get $m_1$ and efficient nodes' distribution in this layer.

$\dots$

Step $d+1$: According to the last result, calculate the node $\hat y _{2^d,0 }[0],\hat y _{2^d,0 }[1]$, $\dots$, $\hat y _{2^d,0 }[2^d-1]$ of the No.$(d+1)$'s layer selectively, get $m_d$ and efficient nodes' distribution in this layer. 

We don't need to start from step 1, we can start from step $p(p=O(\text{log}K))$. With the binary tree search, if $m_d$ gained by step $d+1$ is approximately equal to the sparse number $K$, the binary tree search is finished. In Figure \ref{fig6}, $m_3=5=K$, the binary tree search is finished after the fourth layer by step 4. According to the efficient nodes' distribution of No.$(d+1)$'s layer, we can solve the frequency recovery problem of each 'single-ton' bucket. Finally, the $K$ frequency recovery problem is solved by combining the efficient frequency in each bucket. In Figure \ref{fig6}, $\hat y _{8,0 }[4],\hat y _{8,0 }[6],\hat y _{8,0 }[1],\hat y _{8,0 }[5] , \hat y _{8,0 }[3]$ exists,
it represents that each of the five buckets has exactly one effective frequency. Through the frequency recovery of each 'single-ton' bucket, we obtain  $\hat x _1,\hat x _6,\hat x _{13},\hat x _{20},\hat x _{59}$ individually in every bucket. Finally, we obtain $\hat{x}$ of five elements. This algorithm involves three subproblems as follows:

Subproblem 1: How to calculate $\hat y _{2^d,0 }[0],\hat y _{2^d,0 }[1]$, $\dots$, $\hat y _{2^d,0 }[2^d-1]$ selectively according to the last result?

Solution: The runtime of calculating all $2^d$ number of $\hat y _{2^d,0 }[i]$ is $2^d\text{log}2^d=d2^d$ by using the FFT algorithm. The number of effective nodes in the upper layer is $m_{d-1}$, so the maximum number of effective nodes in this layer is $2m_{d-1}$. The formula for calculating each effective node is Equation (\ref{Eq23}), so the runtime of calculating these $2m_{d-1}$ nodes is $2m_{d-1}2^d$. Therefore, when $2m_{d-1} <=d$, use the Equation (\ref{Eq23}) to calculate $2m_{d-1}$ nodes. Otherwise, use the FFT algorithm to calculate all nodes.
\begin{equation}
\hat{y}_{2^d,0}[i]=\frac{1}{2^d}\sum_{j=0}^{2^d-1}(D_LS_0x)_j \omega_{2^d}^{ij}	 \label{Eq23}
\end{equation}

Subproblem 2: The condition of stopping the binary tree search.

Solution: If the stop condition is defined as $m_{d} = K$, in the worst case, the condition can not be satisfied until $d$ is very large. For example if $\hat{x}_{0}$ and $\hat{x}_{N/2}$ are significant frequencies, after the decomposition of the first layer's node $\hat y _{1,0 }[0]$, the second layer's node $\hat y _{2,0 }[0]$, the third layer's node $\hat y _{4,0 }[0]$,$\dots$, their child node $\hat y _{2,0 }[0],\hat y _{4,0 }[0],\hat y _{8,0 }[0],\dots$ are still aliasing until $d = \text{log}N$. Therefore, threshold $= 0.75K$ can be considered, which means that the $K$ frequencies are put into $0.75K$ buckets. The aliasing problem can be solved by using the SVD decomposition described in the last paragraph. Its extra sampling and calculating cost may be more less than that of continuous expansion search if maybe no more frequencies are separated by the next layer.

Subproblem 3: The frequency recovery problem of approximately 'single-ton' bucket.

Solution: In the exactly sparse case, the problem can be solved by using the phase encoding method described in the last paragraph. In the general sparse case, the problem can be solved by the CS method or the MMSE estimator described in the last paragraph.

The binary tree search method is especially suitable for small $K$ or small support. In the case of small support, the frequencies must be scattered into every different bucket. For example, only eight consecutive frequencies of the 1028 frequencies $\hat{x}_{0},\dots,\hat{x}_{1027}$ are significant frequencies, and their eight locations are equal to 0mod8, 1mod8, 2mod8, $\dots$, 7mod8, respectively. In this way, $m_3$ = 8 can be obtained in the fourth layer, and the binary tree search can be stopped by step 4. This method also has the advantage that it is a deterministic algorithm. Finally, the distribution of at most one frequency in one bucket must happen with the layer's expansion, unlike some probabilistic algorithms such as sFFT-DT algorithms.
\section{Algorithms analysis	in theory	\label{Sect4}}
As mentioned above, the goal of frequency bucketization is to decrease runtime and sampling complexity in the advantage of low dimensions. After bucketization, the filtered spectrum $\hat{y}_{L,\tau,\sigma}$ can be obtained by the original signal $x$. Then through three frameworks introduced above, it is sufficient to recover the spectrum $\hat{x}$ of the filtered spectrum $\hat{y}_{L,\tau,\sigma}$ in their own way. In this section, we introduce and analyze the performance of corresponding algorithms including sFFT-DT1.0 algorithm, sFFT-DT2.0 algorithm, sFFT-DT3.0 algorithm, FFAST algorithm, R-FFAST algorithm and DSFFT algorithm.
\subsection{The sFFT-DT algorithms by the one-shot framework}
The block diagram of the sFFT algorithms by the one-shot framework is shown in Figure \ref{fig1}. We explain the details and analyze the performance in theory as below:

Stage 1 Bucketization: Run $(3a_m+3P)$ times' round for set $\tau =\{ \tau_0=-a_m,\dots,\tau_{3a_{m}-1}=2a_{m}-1\}$ and set $\tau =\left\{ \tau_0=r_1,\tau_1=r_2,\dots, \tau_{P-1}=r_{P}          \right\}$ , Calculate $\hat{y}_{L,\tau}=\mathbf{F}_{B}\mathbf{D}_{L}\mathbf{S}_{\tau}x$. It costs $(3a_m+3P)B\text{log}B$ runtime and $(3a_m+3P)B$ samples.

Stage 2-Step 1 Obtain the number of significant frequencies in every bucket: By computing the  SVD for the matrix $\mathbf{M}_{a_m}$ in every bucket, there are $a_m$ singular values for each bucket. Collect all $B·a_m$ singular values from all buckets and index each singular value. The first $K$ largest singular values will vote the number of significant frequencies in every bucket. It costs $a_m^3B$ runtime.

Stage 2-Step 2 Locate the position by method 1: In the exactly sparse case, in a bucket waiting to be solved, firstly obtain $C \approx (\mathbf{M}_a^{-1}) M_s$, then obtain $a$ number of roots of $P(z)=0$. The sFFT-DT1.0 algorithm uses this method, and it costs $a_m^2K$ runtime in this step under the assumption of $a_m\leq 4$.

Stage 2-Step 2 Locate the position by method 2: In a bucket waiting to be solved, firstly obtain $C \approx (\mathbf{M}_a^{-1}) M_s$, try every possible $z_j=\omega^{f_j}$ to the Equation (\ref{Eq12}), the first $a$ number of smallest $|P(z)|$ is the solution. The sFFT-DT2.0 algorithm uses this method, and it costs $a_mKL$ runtime in this step.

Stage 2-Step 2 Locate the position by method 3: In a bucket waiting to be solved, firstly compute the SVD of the matrix $\mathbf{Y}$: $\mathbf Y = \widetilde{\mathbf V} \Sigma \mathbf V ^*$, then get the set of nonzero$^2$ ordinary eigenvalues of $(\mathbf{V_1}^+)\mathbf{V_2}$. The sFFT-DT3.0 algorithm uses this method, and it costs $a_m^3K$ runtime in this step.

Stage 2-Step 3 Estimate the value by CS solver: In a bucket waiting to be solved, 1) Through the positions $f_j$'s gained by step 2, get the possible value of $z_j$'s as follows ${z}'_j,{z}''_j,{z}'''_j$, then obtain 3$a$ vectors. The dictionary $\mathbf{D}$ contains $3a$ vectors listed above. 2) From $3a$ atoms of the dictionary matrix $\mathbf{D}$, find $a$ number of atoms that best matches the measurements' residual error. Select these $a$ number of atoms to construct a new sensing matrix $\hat{\mathbf{\Phi}}$. 3) Obtain $\hat S$ by the support of $\hat S= \text{argmin}\left \| y-\hat{\Phi } \hat S \right \|_2$ through the LSM. 4) If the residual error meets the requirement, or the number of iterations reaches the assumption, or the residual error becomes larger, the iteration will be quit, otherwise continue to step 2. It costs $3a_m^2PK$ runtime in this step.
\begin{lemma}
Suppose $a_m=4, B=32K, L= N/32K,  P=3a_m =12$ and $N/K\geq 32000$, it costs $O(K\text{log}K)$ runtime and $O(K)$ samples in sFFT-DT1.0 algorithm. It costs $O(K\text{log}K+N)$ runtime and $O(K)$ samples in sFFT-DT2.0 algorithm. It costs $O(K\text{log}K)$ runtime and $O(K)$ samples in sFFT-DT3.0 algorithm.
\end{lemma}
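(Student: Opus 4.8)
The plan is to treat this statement as a complexity accounting exercise, since every contributing cost has already been derived in the preceding subsection. The proof then reduces to substituting the prescribed parameters $a_m=4$, $B=32K$, $L=N/(32K)$ and $P=3a_m=12$ into each stage's cost and summing. First I would record the sampling cost, which is common to all three algorithms: sampling occurs only in Stage 1, where the $(3a_m+3P)$ bucketization rounds each consume $B$ samples, giving $(3a_m+3P)B=(12+36)\cdot 32K=1536\,K$. Because $a_m$ and $P$ are fixed constants and $B=O(K)$, this is $O(K)$ samples for every variant.

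Next I would assemble the runtime as a sum of four terms, three of which are identical across the algorithms. Stage 1 costs $(3a_m+3P)B\log B=1536\,K\log(32K)=O(K\log K)$; Stage 2-Step 1 (the $B$ SVDs of size $a_m$) costs $a_m^3 B=64\cdot 32K=O(K)$; and Stage 2-Step 3 (the CS solver) costs $3a_m^2 P K=3\cdot 16\cdot 12\,K=O(K)$. The three algorithms differ only in Stage 2-Step 2: polynomial rooting in sFFT-DT1.0 costs $a_m^2 K=O(K)$, the matrix-pencil SVD in sFFT-DT3.0 costs $a_m^3 K=O(K)$, while the exhaustive enumeration in sFFT-DT2.0 costs $a_m K L=4K\cdot N/(32K)=N/8=O(N)$. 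Summing, sFFT-DT1.0 and sFFT-DT3.0 are dominated by the Stage 1 term and run in $O(K\log K)$, whereas sFFT-DT2.0 additionally carries the $O(N)$ enumeration term and runs in $O(K\log K+N)$, exactly as claimed.

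Finally I would use the hypothesis $N/K\geq 32000$ to certify that the per-step costs quoted above are genuinely attainable rather than merely substituted numerically. Two validity conditions from the preceding subsection need checking. First, the choice $B=32K$ makes the expected number of significant frequencies per bucket equal to $K/B=1/32$, so by a balls-in-bins estimate the probability that any bucket carries more than $a_m=4$ of them is negligible; this legitimizes the standing assumption $a\leq a_m$ on which Steps 1--3 rest, and which fixes $a_m=4$ as a constant that may be absorbed into the big-$O$. Second, the threshold $N/K\geq 32000$ (equivalently $L=N/(32K)\geq 1000$) pins down the constant regime in which the fixed budget $P=3a_m=12$ remains sufficient for the subspace-pursuit recovery of the $a$-sparse amplitude vector in each resolved bucket.

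The genuine obstacle is not the arithmetic, which is routine, but making this last paragraph rigorous: the quoted runtimes hold only on the high-probability event that every non-empty bucket is resolved with at most $a_m$ Prony components \emph{and} that the CS step correctly recovers the $a$-sparse amplitudes. I expect the delicate part to be the probability and constant bookkeeping that ties $a_m=4$, $P=12$ and the threshold $N/K\geq 32000$ together, since the big-$O$ statement silently absorbs these constants while the underlying recovery guarantees are only probabilistic; one must verify that the failure events contribute no term that survives the asymptotics.
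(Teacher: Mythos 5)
Your cost accounting coincides exactly with the paper's proof: Stage 1 contributes $(3a_m+3P)B\log B=O(K\log K)$ runtime and $(3a_m+3P)B=O(K)$ samples, Stage 2--Step 2 contributes $a_m^2K$, $a_mKL=N/8$, or $a_m^3K$ depending on the algorithm, and the CS solver contributes $3a_m^2PK=O(K)$; summing gives the three claimed bounds. (You also carry the Step-1 SVD term $a_m^3B$, which the paper's proof silently omits from its totals; it is $O(K)$, so nothing changes.) Up to that point the proposal is fine and is essentially the paper's argument.

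The gap is in your final certification step, the only part that goes beyond substitution, and it runs in the wrong direction. The CS recovery condition, which the paper states explicitly in its proof note, is $P\geq a_m\log_{10}L$: the number of measurements must grow with the bucket length $L$, since the sensing matrix in Equation (\ref{Eq18}) is $P\times L$ and the unknown vector is $a$-sparse in dimension $L$. With $P=12$ and $a_m=4$ this forces $\log_{10}L\leq 3$, i.e.\ $L=N/(32K)\leq 10^3$, i.e.\ $N/K\leq 32000$. You assert the opposite: that $L=N/(32K)\geq 1000$ ``pins down the constant regime in which the fixed budget $P=12$ remains sufficient.'' A larger dictionary can never make a fixed number of measurements sufficient. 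Indeed the paper's note says precisely that once $N/(32K)\geq 10^3$ the choice $B=32K$ is no longer tenable: $B$ must be enlarged to keep $L\leq 10^3$, at which point $B=O(K)$, and with it the $O(K)$ sample bound, fails. (This also reveals that the hypothesis as printed in the lemma, $N/K\geq 32000$, contradicts the paper's own note; the regime in which the stated parameters and conclusions are simultaneously valid is $N/K\leq 32000$.) So your arithmetic survives, but the validity argument you identified as the delicate part asserts the reverse of the actual constraint and would collapse under scrutiny; the correct move is to invoke $P\geq a_m\log_{10}L$ and observe that the stated parameter choices satisfy it only when $L\leq 10^3$.
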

\begin{proof}[Proof of Theorem 4]
In sFFT-DT1.0 algorithm, it totally costs $(3a_m+3P)B\text{log}B+a_m^2K+3a_m^2PK=O(K\text{log}K)$ runtime and $(3a_m+3P)B=O(K)$ samples. In sFFT-DT2.0 algorithm, it totally costs $(3a_m+3P)B\text{log}B+a_mKL+3a_m^2PK=O(K\text{log}K+N)$ runtime and $(3a_m+3P)B=O(K)$ samples. In sFFT-DT3.0 algorithm, it totally costs $(3a_m+3P)B\text{log}B+a_m^3K+3a_m^2PK=O(K\text{log}K)$ runtime and $(3a_m+3P)B=O(K)$ samples. Notes: the Theorem 4 has an assumption that $P\geq a_m\text{log}_{10}L$ because of the necessary number of measurements for CS recovery. It means if $N/32K \geq 10^3$, $B$ cannot continue to maintain to equal to $32K$, it has to increase to ensure that $L=N/B\leq 10^3$
\end{proof}

\subsection{The FFAST algorithms by the peeling framework}
The block diagram of the sFFT algorithms by the peeling framework is shown in Figure \ref{fig2}. We explain the details and analyze the performance in theory as below:

Stage 1 Encoding by Bucketization: With the assumption of $K<N^{1/3}$, we run three bucketization cycles, and each cycle needs $R$ times' round. The number of buckets is equal to $B_1, B_2, B_3$ individually in three cycles, correspondingly the size of one bucket is equal to $L_1, L_2, L_3$ individually in three cycles. In the exactly sparse case, in the first peeling, we use the FFAST algorithm with $R=3$ and the set $\tau =\left\{ \tau_0=0,\tau_1=1,\tau_2=2 \right\}$ in three rounds. In the general sparse case, in the first peeling, we use the R-FFAST algorithm with $R = Cm= O(\text{log}^{4/3}N)$ and the set $\tau =\left\{ r_0, r_0+1,\dots,r_1, r_1+2,\dots,r_{C-1}, r_{C-1}+2^{C-1},\dots, r_{C-1} + 2^{C-1}(m-1)  \right\}$ in $R$ rounds. After bucketization, we get the vectors $\overrightarrow{y}_{B_1,0},\dots,\overrightarrow{y}_{B_1,B_1-1},\overrightarrow{y}_{B_2,0},\dots$, $\overrightarrow{y}_{B_2,B_2-1},\overrightarrow{y}_{B_3,0},\dots,\overrightarrow{y}_{B_3,B_3-1}$. It costs $3(B_1 \text{log} B_1+B_2 \text{log} B_2+B_3 \text{log} B_3)$ runtime and $3(B_1+B_2+B_3)$samples in stage 1 for the FFAST algorithm. It costs $(\text{log}^{4/3}N)(B_1\text{log}B_1+B_2\text{log}B_2+B_3\text{log}B_3)$ runtime and $(\text{log}^{4/3}N)(B_1+B_2+B_3)$ samples in stage 1 for the R-FFAST algorithm.

Stage 2 Decoding by several peeling: The subsequent calculation complexity is mainly composed of two parts: the judgment of the old bucket and the generation of the new bucket, the runtime of the generation of the new bucket can be ignored, so the computational complexity is mainly determined by the runtime of bucket judgment. Since the main buckets are identified as 'zero-ton' bucket or 'single-ton' bucket during the first peeling, the number of buckets that need to be calculated in the subsequent peeling iteration is not large, so the runtime is mainly determined at the first peeling. For the FFAST algorithm, there are $(B_1+B_2+B_3)$ buckets to be judged in the first peeling, and the runtime of each bucket is $O(4)$ proved in Sec. \ref{Sect3}. For the R-FFAST algorithm, the runtime of each bucket is $O(4\text{log}^{4/3}N)$ proved in Sec. \ref{Sect3}.
\begin{lemma}
With the assumption of $K<N^{1/3}$, Suppose $B_1 = O(K), B_2 = O(K), B_3 = O(K)$, it costs $O(K\text{log}K)$ runtime and $O(K)$ samples in the FFAST algorithm. It costs $O(K\text{log}^{7/3}N)$ runtime and $O(K\text{log}^{4/3}N)$ samples in the R-FFAST algorithm.
\end{lemma}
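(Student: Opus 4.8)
The plan is to mirror the accounting used in the proof of Theorem 4: total the Stage 1 (bucketization) and Stage 2 (peeling) costs tallied in the two stage-by-stage analyses just above the statement, substitute the hypotheses $B_1=B_2=B_3=O(K)$, and finally invoke $K<N^{1/3}$ to collapse every $\log K$ into a $\log N$. Because the theorem already \emph{supposes} $B_i=O(K)$, I need not reprove that $O(K)$ buckets suffice for the peeling decoder to succeed (that is the content of the sparse-graph / erasure-channel analysis of Section \ref{Sect3}); I only have to bound the arithmetic work under that hypothesis.

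First I would dispatch Stage 1. For FFAST the bucketization cost is $3(B_1\log B_1+B_2\log B_2+B_3\log B_3)$ runtime and $3(B_1+B_2+B_3)$ samples; substituting $B_i=O(K)$ these become $O(K\log K)$ and $O(K)$. For R-FFAST the identical expressions are scaled by the round count $R=Cm=O(\log^{4/3}N)$, giving $(\log^{4/3}N)\cdot O(K\log K)$ runtime and $(\log^{4/3}N)\cdot O(K)=O(K\log^{4/3}N)$ samples.

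The main obstacle is Stage 2: I must argue the peeling runtime is $O(R(B_1+B_2+B_3))=O(RK)$, i.e. that it is governed by the first peeling rather than by the many subsequent peeling rounds. The key observation is that each of the $K$ significant frequencies is a left node of fixed degree $d=3$, so the bipartite graph carries exactly $3K$ edges; a given check node (bucket) is re-identified only when one of its incident edges is peeled away, so the number of times it is judged is bounded by its initial degree, and summing over all buckets the \emph{total} number of bucket identifications across the entire peeling process is at most the edge count $3K=O(K)$. Since one identification of $\overrightarrow{y}_{B_j,i}$ costs $O(R)$ — which is $O(4)$ for FFAST and $O(4\log^{4/3}N)$ for R-FFAST by the Subproblem 1 and Subproblem 2 analyses — and the new-bucket generation cost is subsumed, the whole decoding costs $O(K)$ for FFAST and $O(K\log^{4/3}N)$ for R-FFAST. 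The crux I would justify most carefully is precisely that no later peeling round can re-expand the graph, so the $3K$ edge count genuinely caps the repeated identifications.

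Finally I would combine the two stages. For FFAST the totals are $O(K\log K)+O(K)=O(K\log K)$ runtime and $O(K)$ samples, matching the claim. For R-FFAST the runtime is $(\log^{4/3}N)\cdot O(K\log K)+O(K\log^{4/3}N)$; here the hypothesis $K<N^{1/3}$ forces $\log K<\frac{1}{3}\log N$, so $\log K=O(\log N)$ and the leading term becomes $O(K\log N\cdot\log^{4/3}N)=O(K\log^{7/3}N)$, while the sample count remains $O(K\log^{4/3}N)$, exactly as stated.
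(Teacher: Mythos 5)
Your proposal follows the same route as the paper's own proof: total the Stage~1 bucketization cost ($3(B_1\log B_1+B_2\log B_2+B_3\log B_3)$ runtime and $3(B_1+B_2+B_3)$ samples, each scaled by $R=O(\log^{4/3}N)$ for R-FFAST) plus a Stage~2 cost of $O(R)$ per bucket judgment concentrated in the first peeling, then substitute $B_i=O(K)$, exactly as the paper does. The one difference is a refinement rather than a new route: where the paper merely asserts that later peeling rounds are negligible, you cap the re-identifications by the $3K$ edges of the bipartite graph — a sound tightening, with the small caveat that the first peeling also judges degree-zero (zero-ton) buckets, so the total identification count is $O(B_1+B_2+B_3)+3K=O(K)$ rather than $3K$ alone, which is what your own statement of the Stage~2 bound $O(R(B_1+B_2+B_3))$ already reflects.
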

\begin{proof}[Proof of Theorem 5]
For the FFAST algorithm, it totally costs $3(B_1\text{log}B_1+B_2\text{log}B_2+B_3\text{log}B_3)+ 4(B_1+B_2+B_3) = O(K\text{log}K)$ runtime and $3(B_1+B_2+B_3)=O(K)$ samples. For the R-FFAST algorithm, it totally costs $(\text{log}^{4/3}N)(B_1\text{log}B_1+B_2\text{log}B_2+B_3\text{log}B_3) + 4(\text{log}^{4/3}N)(B_1+B_2+B_3) = O(K\text{log}^{7/3}N)$ runtime and $(\text{log}^{4/3}N)(B_1+B_2+B_3)=O(K\text{log}^{4/3}N)$  samples. Notes: From the paper \cite{IEEEexample:Pawar2018}, with the assumption of $K<N^{1/3}$, if the number of buckets $B_1, B_2, B_3$ are co-prime number and approximately equal to $O(K)$, chain number $R=O(\text{log}^{4/3}N)$, and in every 'single-ton' estimator, the number of iterations $C = O(\text{log} N)$, the number of rounds per iteration $m = O(\text{log}^{1/3}N)$, then the algorithm can correctly recover the spectrum with a high probability(probability of successful $\geq (1-1/K))$.
\end{proof}

\subsection{The DSFFT algorithm by the binary tree search framework}
The first stage of DSFFT algorithm is finding exactly $O(K)$ efficient buckets through the binary tree search method. When $m_d$ gained by step $(d+1)$ is equal to the sparse number $K$, the binary tree search is finished. Now we start to calculate the probability in this case. The total number of buckets is $2^d=B$, the number of aliasing frequencies in each bucket is $L=N/B=N/2^d$. In other words, the right case is $K$ large frequencies of totally $N$ frequencies allocated to $B$ buckets, and there is no aliasing in the $B$ distributed buckets. The probability $P$ is calculated as follows: Firstly, analyze the first bucket, the probability $P_1$ is the case of the non-aliasing case divided by the total case, $P_1=\frac{(N-K)\cdot(N-K-1)\dots(N-K-L+2)}{(N-1)\cdot(N-2)\dots(N-L+1)}$. Secondly, analyze the second bucket, the probability $P_2=\frac{(N-K-L+1)\cdot(N-K-L)\dots(N-K-2L+3)}{(N-L-1)\cdot(N-L-2)\dots(N-2L+1)},\dots$. At last, analyze the No.$K$'s bucket, $P_K=\frac{(N-(K-1)L-1)\dots (N-KL+1)}{(N-(K-1)L-1)  \dots (N-KL+1)}$. So the complete probability $P=P_1 P_2 \dots P_K=\frac{(N-L)(N-2L)\dots(N-(K-1)L)}{(N-1)(N-2)\dots(N-K+1)}=\frac{L^{(K-1)}((B-1)!)((N-K)!)}{((N-1)!)((B-K)!)}$. For example, consider $N=1000, K=5, B=10, L=100$, the complete probability $P=(\frac{995\times \dots 897 }{999\times \dots 901 })\dots(\frac{599\times \dots 501 }{599\times \dots 501 })=\frac{995\cdot 990\cdot 985\cdot 980}{999\cdot 998\cdot 997\cdot 996}$. Table \ref{tab:1} shows the probability $P_{B_0},P_{B_1},P_{B_2},P_{B_3}$ of different $K$ in the case of $B_0=K,B_1=2K,B_2=4K,B_3=8K$. When $B_0=K, P_{B_0}=(\frac{N}{K})^{(K-1)}  \frac{((K-1)!)((N-K)!)}{(N-1)!}$. When $B_1=2K, P_{B_1}=(\frac{N}{2K})^{(K-1)}  \frac{((2K-1)!)((N-K)!)}{((N-1)!)((K)!)}$. When $B_2=4K,P_{B_2}=(\frac{N}{4K})^{(K-1)}  \frac{((4K-1)!)((N-K)!)}{((N-1)!)((3K)!)} \dots$.

\begin{table}[H] 
\caption{The probability $P_{B_0},P_{B_1},P_{B_2},P_{B_3}$ of different $K$ in the case of $B$ from $K$ to $8K$}
\label{tab:1}
\setlength{\tabcolsep}{6pt}
\begin{tabular}{|p{40pt}|p{25pt}|p{60pt}|p{70pt}|p{70pt}|p{70pt}|}
\hline
probability& 
$K=1$& 
$K=2$& 
$K=4$& 
$K=6$& 
$K=8$  \\
\hline
$P_{B_0}$&
1&
$\approx \frac{1}{2}=0.5$&
$\approx \frac{3!}{4^3}\approx 0.0938$&
$\approx \frac{5!}{6^5}\approx 0.0154$&
$\approx \frac{7!}{8^7 }\approx 0.0024$ \\
$P_{B_1}$&
1&
$\approx \frac{3}{4}=0.75$&
$\approx \frac{7!}{8^3 4!}\approx 0.4101$&
$\approx \frac{11!}{12^5 6!}\approx 0.2228$&
$\approx \frac{15!}{16^7 8!}\approx 0.1208$ \\
$P_{B_2}$&
1&
$\approx \frac{7}{8}=0.875$&
$\approx \frac{15!}{16^3 12!}\approx 0.6665$&
$\approx \frac{23!}{24^5 18!}\approx 0.5071$&
$\approx \frac{31!}{32^7 24!}\approx 0.3857$ \\
$P_{B_3}$&
1&
$\approx \frac{15}{16}=0.9375$&
$\approx \frac{31!}{32^3 28!}\approx 0.8231$&
$\approx \frac{47!}{48^5 42!}\approx 0.7224$&
$\approx \frac{63!}{64^7 56!}\approx 0.6340$ \\

\hline

\end{tabular}
\label{tab1}
\end{table}

As we can see from Table \ref{tab:1}, the DSFFT algorithm has high efficiency only when $K$ is very small. The total complexity can be calculated according to the sum of probability times conditional complexity. So the total runtime complexity is $P_{B_0} K \text{log}K +(P_{B_1}-P_{B_0}) 2K\text{log}2K +(P_{B_2}-P_{B_1}) 4K\text{log}4K   +\dots = \sum_{j=0}^{j=\text{log}(N/K)}(P_{B_j}-P_{B_{j-1}})2^jK \text{log}(2^jK)$ for $P_{B_{-1}}=0$, and the total sampling complexity is $P_{B_0} K+(P_{B_1}-P_{B_0}) 2K +(P_{B_2}-P_{B_1}) 4K  +\dots = \sum_{j=0}^{j=\text{log}(N/K)}(P_{B_j}-P_{B_{j-1}})2^jK$ for $P_{B_{-1}}=0$.

\subsection{Summary of six algorithms in theory}
After analyzing three types of frameworks, six corresponding algorithms, Table \ref{tab:2}\footnote{The performance of algorithms using the aliasing filter is got as above. The performance of other algorithms is got from paper[7]. The analysis of robustness will be explained in the next section.} can be concluded with the additionl information of other sFFT algorithms and fftw algorithm.

\begin{table}[H] 
\caption{The performance of fftw algorithm and sFFT algorithms in theory}
\label{tab:2}
\setlength{\tabcolsep}{4pt}
\begin{tabular}{|p{50pt}|p{150pt}|p{120pt}|p{40pt}|}
\hline
algorithm& 
runtime complexity& 
sampling complexity & robustness \\
\hline
sFFT1.0&
$O(K^{\frac{1}{2}}N^{\frac{1}{2}}\ \text{log} ^{\frac{3}{2}}N)$& 
$N\left(1-\left(\frac{N-w}{N}\right)^{\ \text{log} N}\right)$& 
medium \\

sFFT2.0&
$O(K^{\frac{2}{3}}N^{\frac{1}{3}} \ \text{log} ^{\frac{4}{3}} N)$& 
$N\left(1-\left(\frac{N-w}{N}\right)^{\ \text{log} N}\right)$& 
medium  \\

sFFT3.0&
$O(K \ \text{log} N) $& 
$O(K \ \text{log} N) $& 
none  \\

sFFT4.0&
$O(K \ \text{log} N \ \text{log} _8 (N/K) )$& 
$O(K \log N \ \text{log} _8 (N/K) )$& 
bad  \\

MPFFT&
$O(K \log N \ \text{log} _2 (N/K) )$& 
$O(K \log N \ \text{log} _2 (N/K) )$& 
good  \\

sFFT-DT1.0&
$O(K \ \text{log} K )$& 
$O(K)$& 
none  \\

sFFT-DT2.0&
$O(K \ \text{log} K + N)$& 
$O(K )$& 
medium  \\

sFFT-DT3.0&
$O(K \ \text{log} K )$& 
$O(K)$& 
medium  \\

FFAST&
$O( K \ \text{log} K) $& 
$O( K ) $& 
none  \\

R-FFAST&
$ O (K \ \text{log} ^{7/3} N)$& 
$O( K \ \text{log} ^{4/3} K) $& 
good  \\

DSFFT&
$\sum_{j=0}^{j=\text{log}(N/K)}(P_{B_j}-P_{B_{j-1}})2^jK \text{log}(2^jK)$& 
$\sum_{j=0}^{j=\text{log}(N/K)}(P_{B_j}-P_{B_{j-1}})2^jK$& 
medium  \\

AAFFT&
$O( K \text{poly} ( \ \text{log} N))$& 
$O( K \text{poly} ( \ \text{log} N))$& 
medium   \\

fftw&
$ O( N \ \text{log} N)$& 
$N$& 
good  \\

\hline

\end{tabular}
\label{tab1}
\end{table}

From Table \ref{tab:2}, we can see the sFFT-DT1.0 algorithm and the FFAST algorithm are the lowest runtime and sampling complexity, but they are non-robustness. Other algorithms using the aliasing filter are good sampling complexity but compare them with other sFFT algorithms it is no advantage in the runtime complexity except sFFT-DT3.0 algorithm. In addition, the use of aliasing algorithm is limited in some aspects. For the algorithms of the one-shot platform, the performance is not very efficient if $N/K\leq 32000$ because of needing the necessary number of measurements for CS recovery. For the algorithms of the peeling platform, the assumption of $K<N^{1/3}$ is required because if there are four or more cycles, it will be very complicated. For the algorithm of the binary tree search framework, $K$ is required to be very small, because only in this case, there is no need to expand the scale of a large binary tree.

\section{Algorithms analysis	in practice	\label{Sect5}}
In this section, we evaluate the performance of six sFFT algorithms using the aliasing filter: sFFT-DT1.0, sFFT-DT2.0, sFFT-DT3.0, FFAST, R-FFAST and DSFFT algorithm. We firstly compare these algorithms' runtime, percentage of the signal sampled and robustness characteristics with each other. Then we compare some of these algorithms' characteristics with other algorithms: fftw, sFFT1.0, sFFT2.0, sFFT-3.0, sFFT-4.0 and AAFFT algorithm. All experiments are run on a Linux CentOS computer with 4 Intel(R) Core(TM) i5 CPU and 8 GB of RAM.
\subsection{Experimental Setup}
In the experiment, the test signals are gained in a way that $K$ frequencies are randomly selected from $N$ frequencies and assigned a magnitude of 1 and a uniformly random phase. The rest frequencies are set to zero in the exactly sparse case or combined with additive white Gaussian noise in the general sparse case, whose variance varies depending on the SNR required. The parameters of these algorithms are chosen so that they can make a balance between time efficiency and robustness.
\subsection{Comparison experiment about different algorithms using the aliasing filter of themselves}
We plot Figure \ref{fig7}(a), \ref{fig7}(b) representing run time vs. Signal Size and vs. Signal Sparsity for sFFT-DT2.0, sFFT-DT3.0, R-FFAST and DSFFT algorithm in the general sparse case\footnote{The general sparse case means SNR=20db. The exactly case is very similar to the general sparse case except the sFFT-DT1.0 algorithm and the FFAST algorithm. The detail of code, data, report can be provided in https://github.com/zkjiang/-/tree/master/docs/sfft project}. As mentioned above, the runtime is determined by two factors. One is how many buckets to cope with, and another is how much time cost to identify frequencies in efficient buckets. So from Figure \ref{fig7}, we can see 1) The runtime of these four algorithms are approximately linear in the log scale as a function of $N$ and non-linear as a function of $K$. 2) Result of ranking the runtime of four algorithms is sFFT-DT3.0$>$sFFT-DT2.0$>$DSFFT$>$R-FFAST. The reason is the method of R-FFAST algorithm is the most time-consuming, the method of DSFFT algorithm is also not efficient, the method of sFFT-DT2.0 is much better, and the method of sFFT-DT3.0 has the highest time performance. 3) $K$ has a certain limitation. The DSFFT algorithm is very inefficient when $K$ is greater than 50, and the R-FFAST algorithm does not work when $K$ is greater than $N^{1/3}$.

\begin{figure} [H]\centering  
\subfigure[Run time vs. Signal Size.] { 
\includegraphics[width=0.45\columnwidth]{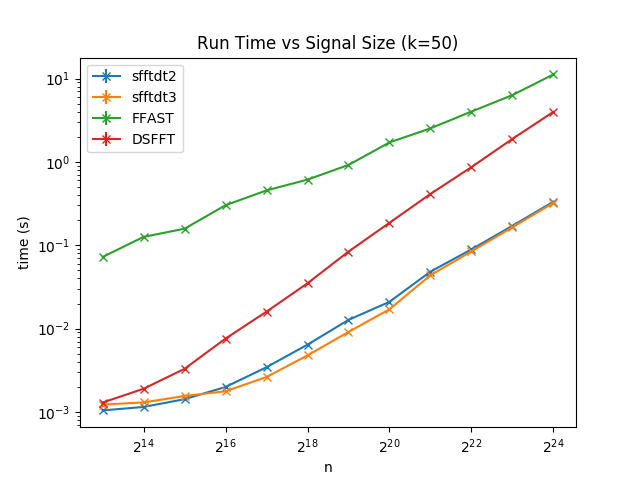}
}   
\subfigure[Run time vs. Signal Sparsity.] { 
\includegraphics[width=0.45\columnwidth]{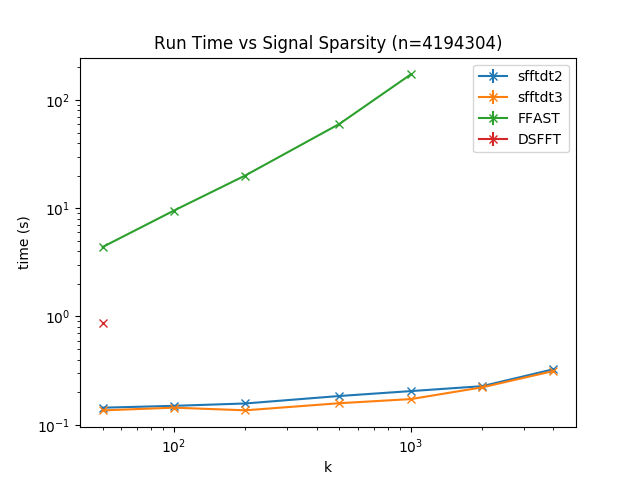}   
}   
\caption{Run time of four algorithms using the aliasing filter in the general sparse case. \label{fig7}}      
\end{figure}

We plot Figure \ref{fig8}(a), \ref{fig8}(b) representing the percentage of the signal sampled vs. Signal Size and vs. Signal Sparsity for sFFT-DT2.0, sFFT-DT3.0 and R-FFAST algorithm in the general sparse case. As mentioned above, the percentage of the signal sampled is also determined by two factors: how many buckets and how many samples sampled in one bucket. So from Figure \ref{fig8}, we can see 1) The percentage of the signal sampled of these three algorithms are approximately linear in the log scale as a function of $N$ and non-linear as a function of $K$. 2) Result of ranking the sampling complexity of three algorithms is R-FFAST$>$sFFT-DT2.0=sFFT-DT3.0 because R-FFAST algorithm uses the principle of CRT, the number of buckets is independent of $K$, and the proportion decreases with the increase of $N$. 3) There is a limit for sFFT-DT2.0 and sFFT-DT3.0 algorithm. When $N/K$ is too large, $B$ can not maintain the linearity of the sparsity $K$. For the R-FFAST algorithm; there is a limit that $K$ cannot be greater than $N^{1/3}$.

\begin{figure} [H]\centering  
\subfigure[Percentage of the Signal Sampled vs. Signal Size.] { 
\includegraphics[width=0.45\columnwidth]{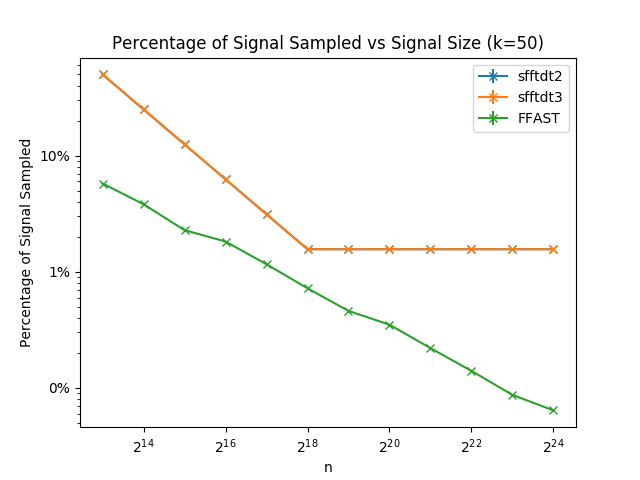}
}   
\subfigure[Percentage of the Signal Sampled vs. Signal Sparsity.] { 
\includegraphics[width=0.45\columnwidth]{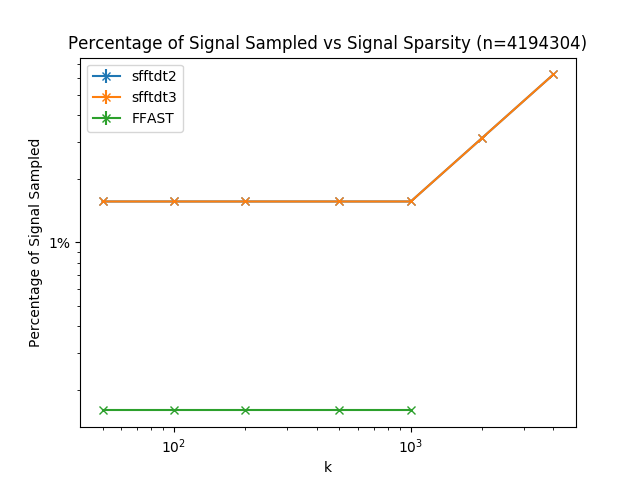}   
}   
\caption{Percentage of the Signal Sampled of algorithms using aliasing filter in the general sparse case. \label{fig8}}    
\end{figure}

We plot Figure \ref{fig9}(a), \ref{fig9}(b) representing run time and $L_1$-error vs SNR for sFFT-DT2.0, sFFT-DT3.0 and R-FFAST algorithm. From Figure \ref{fig9} we can see 1) The runtime of sFFT-DT2.0, sFFT-DT3.0 is approximately equal vs SNR, but the runtime of the R-FFAST algorithm increases with the noise. 2) To a certain extent, these three algorithms are all robust. When SNR is low, only the R-FFAST algorithm satisfies the ensure of robustness. When SNR is medium, the sFFT-DT3.0 algorithm can also meet the ensure of robustness. And only when SNR is bigger than 20db, the sFFT-DT2.0 algorithm can deal with the noise interference. The reason is that the way of binary search is better than other ways in terms of robustness.
\begin{figure} [H]\centering  
\subfigure[Run time vs. SNR.] { 
\includegraphics[width=0.45\columnwidth]{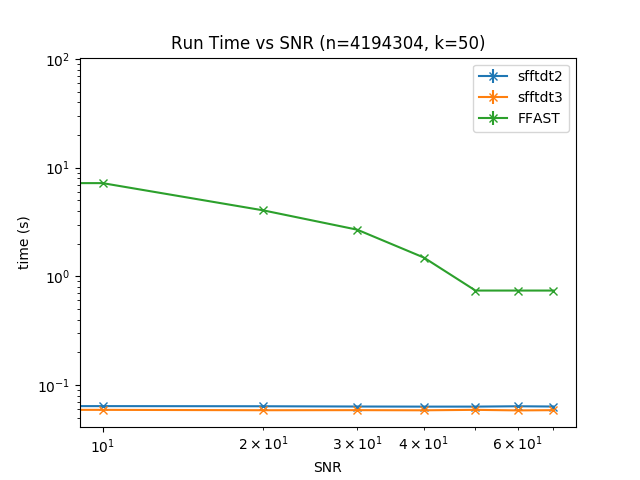}
}   
\subfigure[$L_1$-error vs. SNR.] { 
\includegraphics[width=0.45\columnwidth]{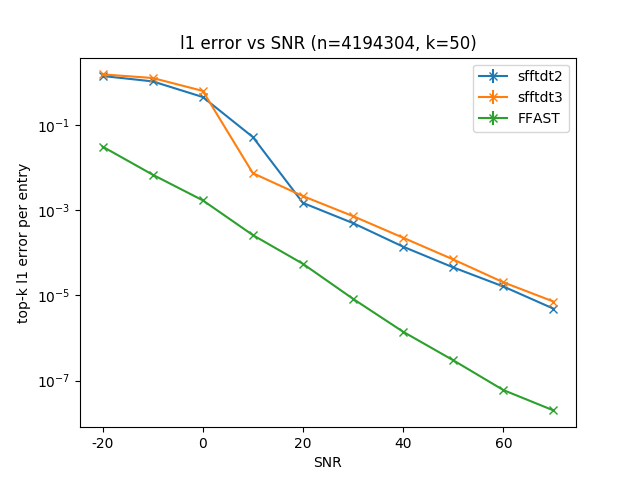}   
}   
\caption{Run time and $L_1$-error of algorithms using aliasing filter in the general sparse case vs. SNR.  \label{fig9}}   
\end{figure}

\subsection{Comparison experiment about algorithms using the aliasing filter and other algorithms}
We plot Figure \ref{fig10}(a) and \ref{fig10}(b) representing run time vs. Signal Size and vs. Signal Sparsity for sFFT-DT3.0, R-FFAST, sFFT2.0, sFFT4.0, AAFFT and fftw algorithm in the general sparse case.\footnote{The sFFT1.0 algorithm is ignored because it is similar to the sFFT2.0 algorithm but not as efficient as it. The sFFT3.0 algorithm is ignored because it is not suitable for general sparsity. The MPSFT algorithm is ignored because it is similar to the sFFT4.0 algorithm but not as efficient as it.} From Figure \ref{fig10}, we can see 1) These algorithms are approximately linear in the log scale as a function of $N$ except fftw algorithm. These algorithms are approximately linear in the standard scale as a function of $K$ except fftw and sFFT-DT3.0 algorithm. 2) Result of ranking the runtime complexity of these six algorithms is sFFT2.0$>$sFFT4.0$>$AAFFT$>$sFFT-DT3.0$>$fftw$>$R-FFAST when $N$ is large. The reason is the Least Absolute Shrinkage and Selection Operator(LASSO) method used in R-FFAST costs a lot of time. Besides, the SVD method and the CS method used in sFFT-DT also cost a lot of time. 2) Result of ranking the runtime complexity of these six algorithms is fftw$>$sFFT-DT3.0$>$sFFT4.0 $>$sFFT2.0$>$AAFFT$>$R-FFAST when $K$ is large. The reason is algorithms using the aliasing filter only need much fewer buckets than algorithms using the flat filter when $K$ is large.
\begin{figure} [H]\centering  
\subfigure[Runtime vs. Signal Size.] { 
\includegraphics[width=0.45\columnwidth]{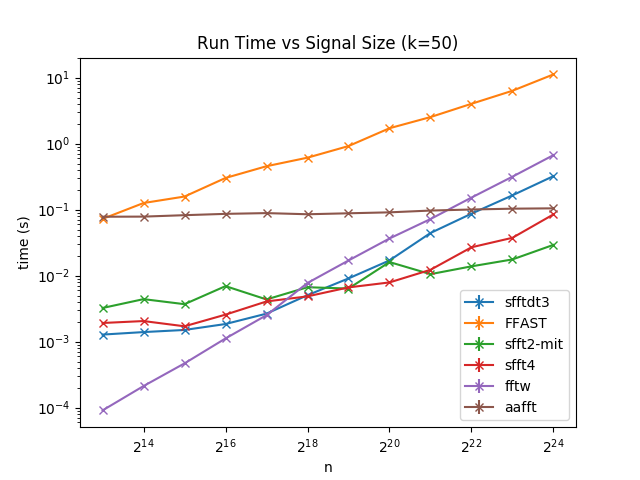}
}   
\subfigure[Runtime vs. Signal Sparsity.] { 
\includegraphics[width=0.45\columnwidth]{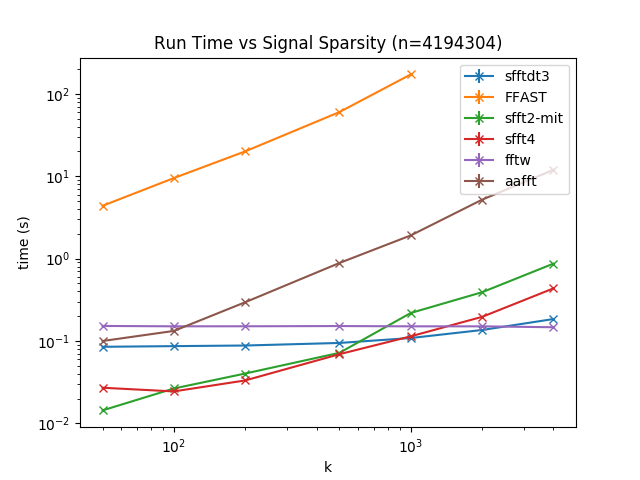}   
}   
\caption{Runtime of two typical algorithms using the aliasing filter and other four algorithms in the general sparse case. \label{fig10}}   
\end{figure}

We plot Figure \ref{fig11}(a) and \ref{fig11}(b) representing the percentage of the signal sampled vs. Signal Size and vs. Signal Sparsity for sFFT-DT3.0, R-FFAST, sFFT2.0, sFFT4.0, AAFFT and fftw algorithm in the general sparse case. From Figure \ref{fig11}, we can see 1) These algorithms are approximately linear in the log scale as a function of $N$ except fftw and sFFT-DT algorithm. The reason is sampling in low-dimension in sFFT algorithms can decrease sampling complexity and it is a limit to the size of the bucket in the sFFT-DT algorithm by using the CS method. These algorithms are approximately linear in the standard scale as a function of $K$ except R-FFAST, sFFT-DT and fftw algorithm. The reason is algorithms using the aliasing filter saving samples by using less number of buckets. 2) Result of ranking the sampling complexity of these six algorithms is R-FFAST$>$sFFT4.0$>$AAFFT$>$sFFT2.0$>$sFFT-DT3.0$>$fftw when $N$ is large. 3) Result of ranking the sampling complexity is sFFT-DT3.0$>$sFFT4.0$>$AAFFT$>$sFFT2.0$>$fftw when $K$ is large.	
\begin{figure} [H]\centering  
\subfigure[Percentage of the signal sampled vs. $N$.] { 
\includegraphics[width=0.45\columnwidth]{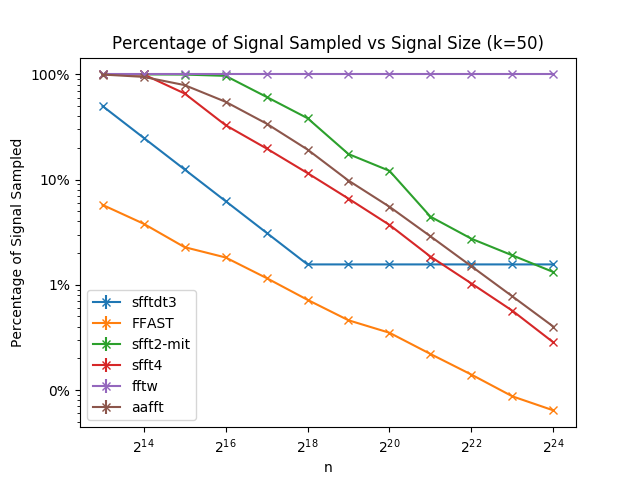}
}   
\subfigure[Percentage of the signal sampled vs. $K$.] { 
\includegraphics[width=0.45\columnwidth]{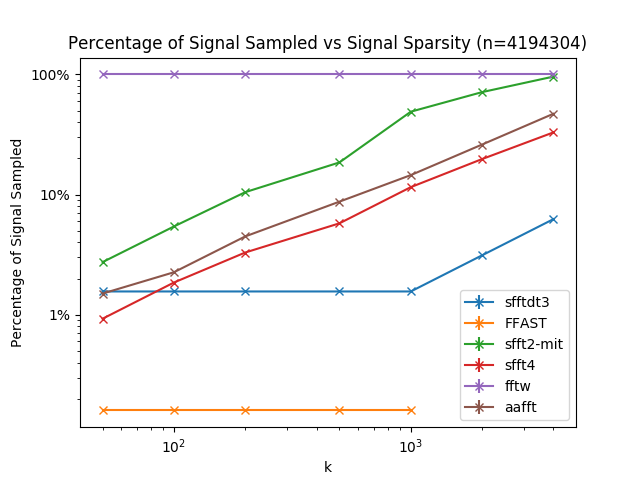}   
}   
\caption{Percentage of the signal sampled of two typical algorithms using the aliasing filter and other four algorithms in the general sparse case.\label{fig11}}   
\end{figure}

We plot Figure \ref{fig12}(a), \ref{fig12}(b) representing run time and $L_1$-error vs SNR for sFFT-DT3.0, R-FFAST, sFFT2.0, sFFT4.0, AAFFT and fftw algorithm. From Figure \ref{fig12}, we can see 1) The runtime is approximately equal vs SNR except the R-FFAST algorithm. 2) To a certain extent, these six algorithms are all robust, but when SNR is low, only fftw and R-FFAST algorithm satisfies the ensure of robustness. When SNR is medium, sFFT2.0, AAFFT and sFFT-DT3.0 algorithm can also meet the ensure of robustness. And only when SNR is bigger than 20db, the sFFT4.0 algorithm can deal with the noise interference. 
\begin{figure} [H]\centering  
\subfigure[Run time vs. SNR.] { 
\includegraphics[width=0.45\columnwidth]{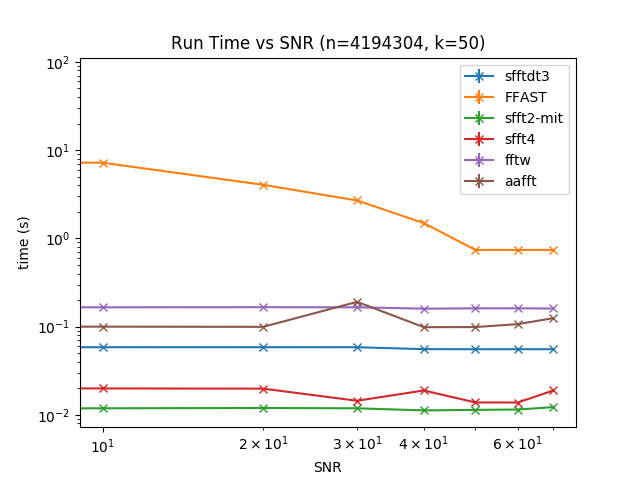}
}   
\subfigure[$L_1$-error vs. SNR.] { 
\includegraphics[width=0.45\columnwidth]{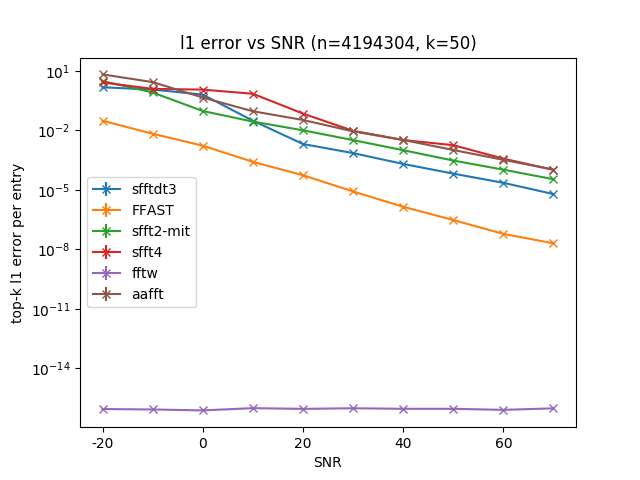}   
}   
\caption{Run time and $L_1$-error of two typical algorithms using the aliasing filter and other four algorithms in the general sparse case vs. SNR.\label{fig12}}   
\end{figure}

\section{Conclusion		\label{Sect6}}
In the first part, the paper introduces the techniques used in the sFFT algorithm including time-shift operation and subsampling operation. In the second part, we analyze six typical algorithms using the aliasing filter in detail. By the one-shot framework based on the CS solver, sFFT-DT1.0 algorithm uses the polynomial method, sFFT-DT2.0 algorithm uses the enumeration method and sFFT-DT3.0 algorithm uses the Matrix Pencil method. By the peeling framework based on the bipartite graph, the FFAST algorithm uses the phase encoding method, the R-FFAST algorithm uses the MMSE estimator method. By the iterative framework, the DSFFT algorithm uses the binary tree search method. We get the conclusion of the performance of these algorithms including runtime complexity, sampling complexity and robustness in theory in Table \ref{tab:2}. In the third part, we make two categories of experiments for computing the signals of different SNR, different $N$ and different $K$ by a standard testing platform and record the run time, percentage of the signal sampled and $L_0,L_1,L_2$ error by using nine different sFFT algorithms in every different situation both in the exactly sparse case and general sparse case. The analysis of the experiment results satisfies theoretical inference.

The main contribution of this paper is 1) Develop a standard testing platform which can test more than ten typical sFFT algorithms in different situations on the basic of an old platform  2) Get a conclusion of the character and performance of the six typical sFFT algorithms using the aliasing filter: sFFT-DT1.0 algorithm, sFFT-DT2.0 algorithm, sFFT-DT3.0 algorithm, FFAST algorithm, R-FFAST algorithm and DSFFT algorithm in theory and practice.

%=====================================

\bibliographystyle{IEEEtran}
\bibliography{IEEEabrv,IEEEexample}

\end{document}